\pgfplotsset{compat=1.14}
\def\commentsversion{0}  % set this to 0 to get paper only wo out comments 
\newcommand{\comments}[1]{}
\def\newversion{1}  % set this to 0 to get paper only wo out comments 
\newcommand{\old}[1]{#1}
\newcommand{\new}[1]{}
\newcommand{\old}[1]{}
\newcommand{\new}[1]{\textcolor{blue}{this is a proposed version. let me know if its ok to adopt or not.~~}#1}
\newcommand{\comments}[1]{#1}
\let\emptyset\varnothing
\tikzset{every picture/.style={line width=0.75pt}} %set default line width to 0.75pt        
\title{Maximum Weight Disjoint Paths in Outerplanar Graphs via Single-Tree Cut Approximators}
\titlerunning{Maximum Weight EDP in Outerplanar Graphs}
\author{Guyslain Naves\inst{1} \and Bruce Shepherd\inst{2} \and Henry Xia\inst{2}}
\authorrunning{G. Naves et al.}
\institute{Aix-Marseille University\thanks{work partially supported by ANR project DISTANCIA (ANR-17-CE40-0015).}, LIS, CNRS UMR 7020, \email{guyslain.naves@univ-amu.fr} 
\and University of British Columbia \email{fbrucesh@cs.ubc.ca}, \email{h.xia@alumni.ubc.ca}
}
\date{\today}
\begin{document}

\maketitle

\begin{abstract}
    Since 1997 there has been a steady stream of advances for the maximum  disjoint paths problem. Achieving tractable results has usually required focusing on relaxations such as: (i) to allow some bounded edge congestion in solutions, (ii) to only consider the unit weight (cardinality) setting, (iii) to only require fractional routability of the selected demands (the all-or-nothing flow setting). For the general form (no congestion, general weights, integral routing) of edge-disjoint paths ({\sc edp}) even the case of unit capacity trees which are stars generalizes the maximum matching problem for which Edmonds provided an exact algorithm.  For general capacitated trees, Garg, Vazirani, Yannakakis showed the problem is APX-Hard and Chekuri, Mydlarz, Shepherd provided a $4$-approximation. This is essentially the only setting where a  constant approximation is known for the general form of \textsc{edp}.   We extend their result by giving  a constant-factor approximation algorithm for general-form \textsc{edp} in  outerplanar graphs. A key component for the algorithm is to find a {\em single-tree} $O(1)$ cut approximator for outerplanar graphs. Previously $O(1)$ cut approximators were only known  via distributions on trees and these were based implicitly on the  results of Gupta, Newman, Rabinovich and Sinclair for distance tree embeddings combined with results of  Anderson and Feige.
\end{abstract}

\newpage

\section{Introduction}

The past two decades have seen numerous advances to the approximability of the maximum disjoint paths problem ({\sc edp}) since the seminal paper \cite{GargVY97}. 
An instance of \textsc{edp} consists of  a (directed or undirected) ``supply'' graph $G=(V,E)$ and a collection of $k$ {\em requests} (aka demands). Each request consists of a pair of nodes $s_i,t_i \in V$. These
are sometimes viewed as a {\em demand graph} $H=(V(G).\{s_it_i: i \in [k]\})$. A subset $S$ of the requests is called {\em routable} if there exist edge-disjoint paths $\{P_i: i \in S\}$  such that $P_i$ has endpoints $s_i,t_i$ for each $i$. We may also be given a profit $w_i$ associated with each request and the goal
is to find a  routable subset $S$ which maximizes $w(S)=\sum_{i \in S} w_i$. The {\em cardinality version} is  where
we have unit weights $w_i \equiv 1$.

For directed graphs it is known \cite{guruswami2003near} that there is no $O(n^{0.5-\epsilon})$
approximation, for any $\epsilon >0$ under the assumption $P \neq NP$. Subsequently, research  shifted to undirected graphs
and two relaxed models. First, in the {\em  all-or-nothing flow model} ({\sc anf})   the notion of routability is relaxed. A subset $S$ is called routable if there is a feasible (fractional) multiflow which satisfies each request in $S$.    In \cite{Chekuri04a} a polylogarithmic approximation is given  for {\sc anf}.  Second, in the {\em congestion} model \cite{KleinbergT98} one is allowed to increase the capacity of each edge in $G$ by some constant factor.
Two streams of results ensued. For general graphs,  a polylogarithmic approximation is ultimately provided \cite{chuzhoy2012polylogarithimic,ChuzhoyL12,chekuri2013poly}  with edge congestion $2$.    For planar graphs, a constant factor approximation is given \cite{seguin2020maximum,CKS-planar-constant}  with edge congestion $2$. There is also an $f(g)$-factor approximation for bounded genus $g$ graphs with congestion 3.

As far as we know, the only congestion $1$ results known for either maximum {\sc anf} or {\sc edp} are as follows; all of these apply only to the cardinality version.
In \cite{kawarabayashi2018all},  a constant factor approximation is given for {\sc anf} in planar graphs and  
for treewidth $k$ graphs there is an $f(k)$-approximation for {\sc edp} \cite{chekuri2013maximum}.
More recent results include a constant-factor approximation in the {\em fully planar} case where $G+H$ is planar \cite{huang2020approximation,garg2020integer}. 
In the weighted regime, there is
a factor $4$ approximation  for  
{\sc edp} in capacitated trees \cite{chekuri2007multicommodity}. We remark that this problem for unit capacity ``stars'' already generalizes the maximum weight matching problem in general graphs. Moreover, inapproximability bounds for {\sc edp} in planar graphs are almost polynomial \cite{chuzhoy2017new}. This lends interest to how far one  can push  beyond trees.   Our main contribution to the theory of maximum throughput flows is the following result which is the first  generalization  of the (weighted) {\sc edp} result for trees 
\cite{chekuri2007multicommodity},
%Chekrui, Mydlarz, Shepherd  
modulo a larger implicit constant  of $224$.

\begin{restatable}{theorem}{outerplanarWEDPapprox}
\label{thm:edp}
 There is an $224$ approximation algorithm for 
the maximum weight {\sc anf} and {\sc edp} problems for capacitated
outerplanar graphs. 
\end{restatable}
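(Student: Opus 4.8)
The plan is to use the single-tree $O(1)$ cut approximator $T$ for the outerplanar graph $G$ (constructed in the preceding section) to reduce the problem to maximum-weight \textsc{edp} on the tree $T$, and then to invoke the factor-$4$ algorithm for capacitated trees \cite{chekuri2007multicommodity}. The final constant $224$ will emerge as the product of four ingredients: the quality $\beta$ of the approximator, the (constant) flow-cut gap $\gamma$ of outerplanar graphs, a congestion-removal factor, and the factor $4$ from the tree algorithm. Throughout, each demand $s_it_i$ of $G$ is identified with the pair of corresponding nodes of $T$, and I write $\mathrm{OPT}_{\mathrm{ANF}}$ and $\mathrm{OPT}_{\mathrm{EDP}}$ for the two optima on $G$.

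First I would upper bound the optimum by a tree quantity. If a set $S$ of demands is fractionally routable in $G$, then $S$ satisfies the cut condition of $G$; applying the approximator in the direction ``tree cuts dominate the corresponding $G$-cuts'' to each fundamental cut of $T$ shows that $S$ satisfies the cut condition of $T$; and since in a tree the cut condition is sufficient for fractional routability (flow-cut gap $1$), $S$ is routable in $T$. Hence $\mathrm{OPT}_{\mathrm{EDP}}(G)\le \mathrm{OPT}_{\mathrm{ANF}}(G)\le \mathrm{OPT}_{\mathrm{ANF}}(T)$. I would then run the factor-$4$ tree algorithm on $T$ to obtain an \emph{integral} routable set $S$ with $w(S)\ge \tfrac14\,\mathrm{OPT}_{\mathrm{ANF}}(T)\ge \tfrac14\,\mathrm{OPT}(G)$, using that its guarantee holds against the fractional (cut-condition) relaxation on $T$.

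It remains to transfer $S$ back to $G$. Since $S$ is routable in $T$ it satisfies the cut condition of $T$; applying the approximator in the other direction (``$G$-cuts are at least a $1/\beta$ fraction of tree cuts'') gives $\mathrm{dem}_S(\delta_G U)\le \beta\,\mathrm{cap}_G(\delta_G U)$ for every $U$, i.e. $S$ satisfies the cut condition of $G$ up to the factor $\beta$. Because outerplanar graphs have constant flow-cut gap $\gamma$, a $1/(\beta\gamma)$-fraction of $S$ is fractionally routable in $G$, which already settles the \textsc{anf} objective. For \textsc{edp} I would additionally partition $S$ into $O(\beta\gamma)$ classes, each individually satisfying the cut condition of $G$ and hence integrally routable (using cut-sufficiency up to the outerplanar flow-cut gap together with integrality of the basic routing on a treewidth-$2$ graph), and keep the heaviest class.

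The delicate part is precisely this last, backward step, since cut approximation only certifies \emph{fractional} routability. The crux is to promote it to genuinely edge-disjoint paths: I expect the real work to be (i) making the reverse map of the approximator constructive, i.e. producing an explicit embedding of $T$ into $G$ of congestion $O(\beta)$, and (ii) the congestion-removal and grouping argument that exploits the treewidth-$2$ structure of $G$ to pass from $O(1)$ congestion down to congestion exactly $1$ while losing only a constant factor in weight. Carefully accounting for the constants across the approximator quality, the outerplanar flow-cut gap, the grouping step, and the factor-$4$ tree algorithm then yields the stated factor $224$.
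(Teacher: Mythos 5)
Your high-level skeleton (single-tree cut approximator plus the factor-$4$ tree algorithm of \cite{chekuri2007multicommodity}) matches the paper, but the step you yourself flag as ``delicate'' --- transferring the tree solution back to $G$ --- contains a genuine gap, and the mechanism you propose for it would fail. You claim each class of your partition ``individually satisfies the cut condition of $G$ and hence [is] integrally routable \dots using cut-sufficiency up to the outerplanar flow-cut gap together with integrality of the basic routing on a treewidth-$2$ graph.'' But in outerplanar (treewidth-$2$) graphs the cut condition does \emph{not} imply integral routability: already for $G$ a $4$-cycle and $H$ a matching of size $2$ the cut condition holds and no integral routing exists; the classical integrality theorems (Okamura--Seymour, Seymour's series-parallel result) all additionally require $G+H$ Eulerian, and no constant-factor grouping recovers parity. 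The paper's resolution is structural rather than a congestion-removal argument: the approximator of Theorem~\ref{thm:tree} is built to be (i) a \emph{subtree} of $G$, so that $G+T$ is outerplanar, and (ii) \emph{conservative} with integer capacities, i.e.\ $\hat u(\delta_T(X))\le u(\delta_G(X))$. Then the edges of $T$, viewed as demands, satisfy the cut condition in $G$ with $G+T$ outerplanar, and the new Theorem~\ref{thm:OP} (cut condition suffices with no Eulerian hypothesis when $G+H$ is outerplanar) routes $T$ integrally inside $G$; any set routable in $T$ is then automatically routable in $G$ with no further loss. Without something like Theorem~\ref{thm:OP} your backward step does not go through. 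Your \textsc{anf} claim has the same flaw: ``a $1/(\beta\gamma)$-fraction of $S$ is fractionally routable'' is not the \textsc{anf} objective, which requires a sub-collection of demands each routed in full.

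Your forward direction is also internally inconsistent with the approximator you would actually have: you first use ``tree cuts dominate the corresponding $G$-cuts'' to conclude $S$ routes in $T$ with no loss, and later use ``$G$-cuts are at least a $1/\beta$ fraction of tree cuts''; a single tree cannot be both dominating and conservative unless it is exact. The paper puts the loss entirely on the forward side: since $\hat u(\delta_T(X))\ge \tfrac1{14}\,u(\delta_G(X))$, an optimal $S$ for $G$ satisfies $14\times$ the cut condition in $T$, and Theorem~\ref{thm:cms} partitions $S$ into $4\cdot 14=56$ classes each routable in $T$, so $\mathrm{OPT}(T)\ge \mathrm{OPT}(G)/56$; the factor-$4$ tree algorithm then yields $224=56\cdot 4$. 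Your accounting of $224$ as (approximator quality)$\times$(flow-cut gap)$\times$(congestion removal)$\times 4$ does not correspond to a valid chain of inequalities.
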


It is natural to  try to prove this is by reducing the problem in outerplanar graphs to trees and then use \cite{chekuri2007multicommodity}.
A promising approach is to use  results of
\cite{gupta2004cuts} -- an $O(1)$ distance tree embedding for outerplanar graphs --  and a {\em transfer theorem} \cite{andersen2009interchanging,Racke08} which proves a general  equivalence between distance and capacity embeddings.
Combined, these results  imply that there is a probabilistic  embedding into trees which approximates cut capacity in outerplanar graphs with constant congestion.   
One could then try to mimic the success of  using low-distortion (distance) tree  embeddings to approximate  minimum cost  network design problems. There is an issue with this approach however. Suppose we have a distribution on trees $T_i$ which approximates cut capacity in expectation.  We then apply a known {\sc edp} algorithm which outputs a subset of requests $S_i$ which are routable in each $T_i$.    While the tree embedding guarantees the convex combination of $S_i$'s satisfies the cut condition in $G$, it may be that no single $S_i$ obeys the cut condition, even approximately. This is a problem  even for {\sc anf}.  In fact, this seems to be a problem even when  the trees are  either dominating  or dominated by $G$.
\iffalse
SKETCH WHY:  If the $T_i$'s are all dominating $G$ then the issue is already
implied in what I typed. Some of the $S_i$s may massively violate cut capacity even though in expectation they are fine.  If $T_i$'s are dominated then this problem is solved. But then while in expectation the $T_i$'s cover at least some constant fraction of $G$'s capacity. There is no guarantee that the $S_i$'s will as well. So I dont see that we can guarantee that one of the $S_i$'s will give at least constant times the OPT in $G$.
\fi
We resolve this   by computing a {\bf single} tree which approximates the cuts in $G$ -- see Theorem~\ref{thm:tree}. Our algorithmic proof  is heavily inspired by  work of Gupta \cite{gupta2001steiner} which gives a method for eliminating Steiner nodes in probabilistic (distance) tree embeddings for general graphs.  

It turns out that having a single-tree is not enough for us and we need additional technical properties to apply the algorithm from \cite{chekuri2007multicommodity}.  First,   our single tree $T$ should have  integer capacities and be  non-expansive, i.e., $\hat{u}(\delta_T(S)) \leq u(\delta_G(S))$ (where $\hat{u}/u$ are the edge capacities in $T/G$ and $\delta$ is used to denote the edges in the cut induced by $S$).
To see why it is useful that $T$ is an under-estimator of $G$'s cut capacity,     consider the classical grid example of \cite{GargVY97}. They give an instance with a set of $\sqrt{n}$ requests which satisfy the cut condition in $2 \cdot G$, but for which  one can only route  a single request in the capacity of $G$.

If our tree is an under-estimator, then  we can ultimately obtain a ``large'' weight subset of requests  satisfying the cut condition in $G$ itself. However, even this is not generally  sufficient for (integral) routability. For a multiflow instance $G/H$ one normally also requires that $G+H$ is Eulerian,
even for easy instances such as when $G$  is a $4$-cycle. The final ingredient we use  is  that our single tree $T$ is actually a {\bf subtree} of $G$ which
 allows us to invoke  the following result -- see Section~\ref{sec:required}. 
\begin{restatable}{theorem}{OProute}
\label{thm:OP}
Let G be an outerplanar graph with integer edge capacities $u(e)$. Let $H$ denote a
demand graph such that $G + H = (V (G),E(G) \cup E(H))$ is outerplanar. If $G,H$ satisfies the cut
condition, then $H$ is routable in G.
\end{restatable}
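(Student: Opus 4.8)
The plan is to prove Theorem~\ref{thm:OP} by induction, peeling off one unit of a judiciously chosen demand at a time while maintaining both the cut condition and the outerplanarity of $G+H$, so that each step produces a strictly smaller instance to which the induction hypothesis applies. Conceptually this mirrors the Okamura--Seymour approach to routing when all terminals lie on a single face, with one essential twist: here we are \emph{not} given that $G+H$ is Eulerian. The $4$-cycle example mentioned above shows that the cut condition by itself is not sufficient in general, so the parity role played by the Eulerian hypothesis in Okamura--Seymour must be supplied by something else. That something is the fact that outerplanarity of $G+H$ forces the demand edges to be pairwise \emph{non-crossing}, and exploiting this non-crossing (laminar) structure is the heart of the argument.

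First I would make two harmless reductions. Deleting every edge of zero capacity or zero demand changes nothing, and I would assume $G+H$ is $2$-connected: a demand joining two different connected components is forbidden by the empty cut, and in the presence of a cut vertex every $s_i$--$t_i$ path decomposes along the block-cut tree, so the cut condition and outerplanarity restrict to each block and the instance splits. For a $2$-connected outerplanar graph the outer face is a Hamiltonian cycle $C = v_1 v_2 \cdots v_n$, and in a fixed outerplanar embedding every supply and every demand edge is a chord or boundary edge of $C$ with no two edges crossing. In particular the demand edges form a non-crossing family of chords of $C$.

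For the inductive step I would choose an \emph{innermost} demand $h = st$: one of the two $s$--$t$ arcs $A$ of $C$ has no endpoint of any other demand in its interior. Such an $h$ exists because the demands are non-crossing. Since $d_h \ge 1$, the cut condition rules out an empty $s$--$t$ cut, so $G$ contains a positive-capacity $s$--$t$ supply path; I would route one unit of $h$ along such a path $P$, decrement $d_h$ and each $u(e)$ for $e \in P$ by one, delete any edge that reaches zero, and recurse. Every change is by an integer, and deleting edges cannot destroy outerplanarity, so both integrality and the structural hypotheses are preserved for free.

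The hard part will be to select $P$ so that the cut condition still holds afterwards. A cut $\delta(W)$ can only become violated if it was tight and loses strictly more supply than demand, which happens precisely when $P$ crosses $W$ more often than $h$ does. The key observation is that because the arc $A$ is free of other terminals and the demands are non-crossing, every vertex strictly inside $A$ has all of its incident edges contained in the ``pocket'' bounded by $h$ and $A$ and carries no demand; this lets me push $P$ to run inside that pocket and, by an uncrossing argument against the laminar family of tight cuts, guarantee that $P$ crosses each tight cut no more often than $h$ does. This is exactly where the non-crossing structure substitutes for the Eulerian hypothesis of Okamura--Seymour, and it is exactly the step that fails for the crossing demands of the $4$-cycle. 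An alternative would be to first reduce to the Eulerian case and invoke a planar routing theorem directly, but since that reduction again relies on non-crossing, I expect the self-contained induction above to be cleaner, with the uncrossing/cut-preservation lemma as the main technical obstacle.
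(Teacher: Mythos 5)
Your overall strategy coincides with the paper's: reduce to unit capacities and to $2$-connected blocks, pick an innermost demand $v_iv_j$ (one whose arc $A$ of the outer cycle contains no other terminal in its interior), route one unit of it, and recurse after verifying that the cut condition survives. The gap is that the step you yourself flag as ``the main technical obstacle'' --- choosing the path $P$ and proving no cut becomes violated --- is the entire content of the theorem, and your proposal does not carry it out. An unspecified ``uncrossing argument against the laminar family of tight cuts'' is not yet a proof: an arbitrary positive-capacity $s$--$t$ path, even one confined to the pocket, can cross a central cut more often than the demand does, and you give no construction of the path nor a proof of the crossing bound.

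What closes the gap is a concrete choice that makes the uncrossing machinery unnecessary: take $P$ to be the outer-face arc $A = v_i, v_{i+1}, \ldots, v_j$ itself (which exists with positive capacity because, after the block reduction, the outer face is a cycle). It then suffices to check central cuts $\delta_G(X)$, whose shores are arcs of the outer cycle. If $v_i$ and $v_j$ are separated by such a cut, $P$ uses exactly one edge of $\delta_G(X)$ while the routed demand removes exactly one from $\delta_H(X)$, so the inequality is preserved. If $v_i, v_j \notin X$, then either $X \cap V(P) = \emptyset$, in which case $P$ uses no edge of the cut, or $X \subseteq V(P) \setminus \{v_i, v_j\}$, in which case $X$ contains no terminals (this is precisely your ``pocket'' observation) and the cut condition is vacuous. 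So your structural insight is the right one, but you should commit to $P = A$ rather than an abstract path; at that point the tight-cut analysis you were worried about collapses to the two-case check above. One further point of care: deleting the edges of $P$ can destroy $2$-connectivity, so the block decomposition of Lemma~\ref{lemma:2con-cc} must be re-invoked at every iteration, not just once at the start.
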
 

\noindent 
The key point here is that  we can avoid the usual parity condition  needed, such as in \cite{Okamura81,seymour1981matroids,frank1985edge}.
We  are not presently aware of the above result's existence in the  literature.

\subsection{A Single-Subtree Cut Sparsifier and Related Results}

Our main cut approximation theorem is the following which may be of independent interest.

\begin{restatable}{theorem}{integerTree}
\label{thm:tree}
For any connected outerplanar graph $G=(V,E)$ with integer edge capacities $u(e) > 0$, there is a subtree $T$ of $G$ with integer edge weights $\hat{u}(e) \geq 0$ such that
\[
\frac{1}{14} u(\delta_G(X)) \leq \hat{u}(\delta_{T}(X)) \leq u(\delta_G(X)) \mbox{ for each proper subset $X \subseteq V$} 
\]
\end{restatable}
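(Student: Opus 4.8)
The plan is to reduce to the $2$-connected case and then build the subtree by recursing on the structure of the bounded faces. First I would use the block-cut tree of $G$: any cut $\delta_G(X)$ splits into the cuts it induces on the individual blocks, and these capacities add across the cut vertices where blocks meet. So it suffices to produce a non-expansive constant-factor subtree approximator inside each $2$-connected block and then glue these subtrees together at the cut vertices; both inequalities are preserved by the gluing because the cut capacities are additive over blocks.

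For a $2$-connected outerplanar $G$ I would fix the outer Hamiltonian cycle and work with the weak dual tree $D$, whose nodes are the bounded faces and whose edges are the chords (the internal edges of $G$). The key structural fact I would establish first is a clean description of the minimal cuts: every bond of $G$ consists of exactly two boundary edges together with the chords lying on the unique path in $D$ between the two faces incident to those boundary edges. Since both $\delta_G(X)$ and $\delta_T(X)$ are unions of bonds, it essentially suffices to control the approximation on bonds. As a sanity check, the chordless case (a single cycle) is already instructive: deleting a minimum-capacity edge yields a Hamiltonian path, and keeping the original capacities on the surviving edges gives a non-expansive factor-$2$ approximator, since any crossed bond of the cycle uses at least two edges and the single deleted edge can be charged to the rest.

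The main work is then the recursive construction on $D$, and this is where I expect the real difficulty. A \emph{fan} of chords shows that the Hamiltonian path is hopeless in general: an interval cut crosses only two path edges but can cross arbitrarily many chords, so the subtree is forced to include chord edges for such configurations, while long interval cuts simultaneously force it to include boundary edges. I would therefore build the spanning subtree by processing $D$ from its leaves inward (absorbing one ear / leaf face at a time, or splitting at a balancing chord into $G_1, G_2$, recursing, and re-merging at the shared endpoints while deciding which shared edge to keep). At each step I assign every retained edge a weight charged to a disjoint portion of $G$'s capacity, so that non-expansiveness $\hat{u}(\delta_T(X)) \le u(\delta_G(X))$ holds by construction; the lower bound $\frac{1}{14} u(\delta_G(X)) \le \hat{u}(\delta_T(X))$ is then verified bond-by-bond, tracking how the constant degrades through the recursion. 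Consolidating the chord capacities onto a genuine \emph{subtree} of $G$ --- rather than onto an abstract weighted tree carrying Steiner nodes --- is precisely the point at which I would adapt Gupta's Steiner-node-elimination technique \cite{gupta2001steiner}. Finally, since $G$ has integer capacities and every charging step moves an integer amount of capacity, I would keep the weights integral throughout, folding any unavoidable rounding loss into the constant. The most delicate part is keeping both inequalities alive at once as the recursion accumulates its constant: the upper bound wants few and light tree edges, whereas the lower bound demands enough tree weight on every one of the exponentially many cuts, and the fan and interval examples pull in opposite directions.
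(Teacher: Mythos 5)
Your setup matches the paper's almost exactly: reduce to $2$-connected blocks, pass to the dual tree of the bounded faces, observe that every bond is two outer-cycle edges plus the chords dual to the path between the two corresponding faces, and hence that the whole problem becomes: reweight the dual tree so that (i) every node retains a zero-weight path to a leaf (this is exactly the condition that the positive-weight primal edges sit inside a spanning subtree of $G$) and (ii) all leaf-to-leaf distances are preserved up to constant factors. The paper does precisely this, and your reduction of general cuts to bonds is also the paper's reduction to central cuts (and is sound, since $\hat u$ extended by zero off $T$ makes both sides additive over that decomposition).

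The gap is that the construction you then sketch --- ``absorbing one ear at a time,'' ``splitting at a balancing chord,'' ``charging each retained edge to a disjoint portion of capacity'' --- is not a weight-assignment rule, and it is exactly the step where the theorem lives. The paper's algorithm is quite specific: root the dual tree at a non-leaf $r$, let $h(v)$ be the distance from $v$ to the nearest leaf of its subtree, cut at the points $r_1,\dots,r_k$ at distance exactly $h(r)/2$ from $r$ ordered by increasing $h(r_i)$, give weight $0$ to every edge above the $r_i$'s except for weight $h(r_i)$ on the top edge of each subtree $T(r_i)$ with $i\ge 2$, and recurse; the leaf-to-leaf bounds $\tfrac14 d \le d^* \le 2d$ then follow from an induction establishing $d(x,r)-h(r) \le d^*(x,r) \le 2d(x,r)-h(r)$, and a second, rounded (floor/ceiling) version of the same algorithm yields integrality and non-expansiveness at the cost of the constant dropping to $\tfrac1{14}$. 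Nothing in your proposal pins down an analogous rule, nor explains why any leaf-inward absorption keeps \emph{both} inequalities with a globally bounded constant rather than one that degrades with the depth of the recursion --- which, by Theorem~\ref{thm:lowerbound}, is genuinely what happens if you insist on keeping the original capacities on the retained edges. So the framework is right, but the proof's central object is missing.
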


We discuss some connections of this result to prior work on sparsifiers and metric embeddings.
 Celebrated work of R\"acke \cite{racke02} shows the existence of a single capacitated tree $T$ (not a subtree) which behaves as a  flow sparsifier for a given graph $G$.  In particular, 
 routability of demands on $T$ implies fractional routability in $G$ with edge congestion $polylog(n)$; this bound was further improved to $O(\log^2n \log\log n)$ \cite{harrelson2003polynomial}.  Such single-tree results were also instrumental in an  application to maximum throughput flows:  a polylogarithmic approximation for the maximum all-or-nothing flow problem in general graphs \cite{chekuri2013all}. Even more directly to Theorem~\ref{thm:tree}  is work on cut sparsifiers; in \cite{racke2014improved} it is shown that there is a single tree  (again, not subtree) which approximates cut capacity in a general tree $G$ within a factor of  $O(\log^{1.5} \log\log n)$.  As far as we know, our result is the only global-constant factor single-tree cut approximator for a family of graphs.

 R\"acke improved the bound for flow sparsification to an optimal congestion of $O(\log n)$ \cite{Racke08}. Rather than a single tree, this work requires a convex combination  of (general) trees to simulate the capacity in $G$. His  work also revealed a beautiful equivalence between the existence of good (low-congestion) distributions over trees for capacities, and 
 the existence of good  (low-distortion) distributions over trees for distances \cite{andersen2009interchanging}. 
 This {\em transfer theorem} states very roughly that for a graph $G$ the following are equivalent for a given $\rho \geq 1$. (1) For any edge lengths $\ell(e)>0$, there is a (distance) embedding of $G$ into a distribution of trees which has stretch at most $\rho$. (2) For any edge capacities $u(e)>0$, there is a (capacity) embedding of $G$ into a distribution of trees which has congestion at most $\rho$.  This work has been applied in other related contexts such as flow sparsifiers for proper subsets of terminals \cite{englert2014vertex}.
 
 The transfer theorem uses a very general setting where there are a collection of valid {\em maps}. A  map $M$ sends an  edge of $G$  to an abstract ``path'' $M(e) \subseteq E(G)$.  The maps may be refined for the application of interest. In the so-called {\em spanning tree setting}, each $M$ is associated with a subtree $T_M$ of $G$ (the setting most relevant to Theorem~\ref{thm:tree}). $M(e)$ is then the unique path  which joins the endpoints of $e$ in $T_M$.  For an edge $e$, its {\em stretch} under $M$ is  $(\sum_{e' \in <(e)} \ell(e'))/\ell(e)$.
 In the context of distance tree embeddings this model has been studied in \cite{alon1995graph,AbrahamBN08,elkin2008lower}.
 In capacity settings, the {\em congestion} of an edge under $M$ is $(\sum_{e': e \in M(e)} c(e'))/c(e)$.  One can view this as simulating the capacity of $G$ using the tree's edges with bounded congestion.  The following result shows that we cannot guarantee a single subtree with $O(1)$ congestion even for outerplanar graphs;   this example was found independently  by Anastasios Sidiropoulos \cite{tasos}.

\begin{theorem}
\label{thm:lowerbound}
There is an infinite family $\mathcal{O}$ of outerplanar graphs  
such that for every  $G \in \mathcal{O}$ and every spanning tree $T$ of $G$:
\[
\max_{X} \frac{u(\delta_G(X))}{u(\delta_T(X))} = \Omega(\log|V(G)|),
\]
where the max is taken over fundamental cuts of $T$.
\end{theorem}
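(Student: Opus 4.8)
The plan is to take the family to be the \emph{dyadic interval graphs}. Let $G_k$ have vertex set $V=\{0,1,\dots,2^k\}$ and edge set consisting of all pairs $\{i\cdot 2^j,\,(i+1)\cdot 2^j\}$ for $0\le j\le k$ and $0\le i<2^{k-j}$, each with unit capacity $u\equiv 1$. First I would record the two basic features. The chords form a laminar family of intervals, so they can be drawn without crossings with all vertices placed on a circle in the order $0,1,\dots,2^k$ (the unit edges together with the long chord $\{0,2^k\}$ bound the outer face); hence $G_k$ is outerplanar. And $|V|=2^k+1$, so $\log|V|=\Theta(k)$. Taking $\mathcal{O}=\{G_k:k\ge 1\}$ gives the required infinite family.

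The heart of the argument is a single clean counting fact about prefixes: for every $0\le c<2^k$ the prefix $X=\{0,1,\dots,c\}$ satisfies $u(\delta_{G}(X))=k+1$. Indeed, an edge of scale $j$ crosses the gap between $c$ and $c+1$ exactly when its length-$2^j$ dyadic interval straddles that gap, and the window $\{c-2^j+1,\dots,c\}$ of $2^j$ consecutive integers contains exactly one multiple of $2^j$, namely the left endpoint of the unique such chord (and this chord always fits inside $[0,2^k]$). Thus there is precisely one crossing edge per scale $j\in\{0,\dots,k\}$, so every prefix is a cut of capacity exactly $k+1$, even though it may be attached to the rest of the graph by very few tree edges.

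The key step is to show that \emph{every} spanning tree realizes some prefix as a fundamental cut. Here I would use that $T\subseteq G_k$ is a set of \emph{non-crossing} chords on the convex point set $0,1,\dots,2^k$. Let $e=\{0,w\}$ be the first edge on the unique $T$-path from $0$ to $2^k$, and let $X$ be the side of the fundamental cut of $e$ containing $0$. Since no edge of $T$ crosses the chord $e$, the set $X$ is a contiguous arc of the cyclic order; as $X$ contains $0$ but not $2^k$ (which lies on the far side of $e$), and $0,2^k$ are cyclically adjacent, $X$ must be a prefix $\{0,\dots,c\}$. Therefore $u(\delta_T(X))=u(e)=1$ while $u(\delta_G(X))=k+1$, so the maximum over fundamental cuts is at least $k+1=\Omega(\log|V|)$.

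I expect the main conceptual obstacle to be convincing oneself that no spanning tree can avoid this, because every natural \emph{averaging} estimate is too weak: since an outerplanar graph has only $O(|V|)$ edges, both the mediant bound $\max_e u(\delta_G(X_e))/u(e)\ge \big(\sum_{e'}u(e')\,d_T(e')\big)/\sum_{e\in T}u(e)$ and the degree bound $\max_v \deg_G(v)/\deg_T(v)$ yield only a constant, consistent with the fact that $G_k$ actually admits spanning trees of $O(1)$ average stretch. The proof must therefore bypass averaging and exploit planarity directly: the non-crossing property forces every fundamental cut to be an interval, and the dyadic construction guarantees that all $\Theta(\log|V|)$ scales simultaneously contribute to the boundary of any interval touching the endpoint $0$. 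The only technical care needed is the uniform "$k+1$ crossings per prefix" count, which is exactly where the dyadic (rather than arbitrary) placement of chords is essential.
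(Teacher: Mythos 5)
Your proof is correct, but it takes a genuinely different route from the paper's. The paper builds a recursively bisected triangle on $3\cdot 2^n$ vertices and then argues entirely in the planar dual: it invokes the Andersen--Feige correspondence (a spanning tree of congestion $\rho$ in $G$ yields a spanning tree of the dual with stretch $\rho+1$), locates a cycle of length $2n+2$ in the dual, and finishes with the Rabinovich--Raz $\Omega(|V(C)|)$ lower bound for tree embeddings of a cycle. You instead work in the primal with an essentially equivalent construction (your $G_k$ is one ``third'' of the paper's graph) and give a direct, self-contained counting argument: every prefix cut meets exactly one edge per scale, hence has capacity $k+1$, while planarity forces some fundamental cut of any spanning tree to be a prefix of tree-capacity $1$. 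Your version buys elementarity and an explicit constant ($k+1\approx \log_2|V|$) with no black boxes; the paper's version is shorter given the cited machinery and fits its broader theme of duality between cut and distance approximators. One point to tighten: the contiguity of the shore $X$ does not follow merely from ``no edge of $T$ crosses the chord $e$'' --- a priori the component of $0$ could branch into both arcs determined by $e$. The clean justification is that the two components of $T-e$ are vertex-disjoint connected subgraphs of an outerplanar graph, so their vertex sets cannot interleave in the cyclic order (two disjoint curves in the disk cannot connect interleaved boundary pairs); since $w$ and $2^k$ lie in the other component and $2^k$ is cyclically adjacent to $0$, the component of $0$ is indeed a prefix. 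This is a one-line repair, not a flaw in the approach.
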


This suggests that the single-subtree result Theorem~\ref{thm:tree} is  a bit  lucky
and critically requires the use of tree capacities different from $u$. 
Of course  a single tree is sometimes
unnecessarily restrictive. For instance, outerplanar graphs also have an $O(1)$-congestion embedding using a distribution of subtrees by the transfer theorem (although we are not aware of one  explicitly given in the literature). This follows implicitly due to existence of an $O(1)$-stretch embedding into subtrees  \cite{gupta2004cuts}. 

Finally we remark that despite the connections between distance and capacity tree embeddings,  Theorem~\ref{thm:tree} stands in contrast to the situation for distance embeddings.  Every embedding of the $n$ point cycle into a (single) subtree suffers  distortion $\Omega(n)$, and indeed this also holds for embedding into an arbitrary (using Steiner nodes etc.) tree \cite{rabinovich1998lower}.

\section{Single spanning tree cut approximator in Outerplanar Graphs}

In this section we first show the existence of a single-tree 
which is an $O(1)$ cut approximator for an outerplanar graph $G$.
Subsequently we show that there is such a tree with two additional properties. First, its capacity on every cut is at most the capacity in $G$, and second, all of its weights are integral.  These additional properties (integrality and conservativeness) are needed in our application to {\sc edp}. The formal statement we prove is as follows.

\integerTree*

In Section~\ref{sec:flowdist}, we show how to view capacity approximators in $G$ as (constrained) distance tree approximators in the planar dual graph. From then on, we look for distance approximators in the dual which correspond to trees in $G$. In Section~\ref{sec:non-conservative} we prove there exists a  single-subtree cut approximator. In Appendix~\ref{sec:extend} we show how to make this conservative while maintaining integrality of the capacities.  
 In Section~\ref{sec:lb} we show that we cannot achieve Theorem~\ref{thm:tree} in the exact weight model.
% \textcolor{red}{must update here before submitting since we %will have two versions. 1 new one for arxiv and 2. a submitted %version without the appendices.}

\subsection{Converting flow-sparsifiers in outerplanar graphs to distance-sparsifiers in trees}
\label{sec:flowdist}

Let $G = (V, E)$ be an outerplanar graph with capacities $u:E\to\mathbb{R}^+$.
Without loss of generality, we can assume that $G$ is 2-node connected,
so the boundary of the outer face of $G$ is a cycle that
contains each node exactly once. Let $G^*$ be the dual of $G$; we assign weights
to the dual edges in $G^*$ equal to the capacities on the corresponding edges in $G$.
Let $G_z$ be the graph obtained by adding an apex node $z$ to $G$ which is connected
to each node of $G$, that is $V(G_z)=V\cup\{z\}$ and
$E(G_z)=E\cup\{(z,v):v\in V\}$. We may embed $z$ into the outer face of $G$, so $G_z$
is planar. Let $G_z^*$ denote the planar dual of $G_z$.

% \textcolor{orange}{add picture: Gz and dual}
% \vspace*{.5cm}
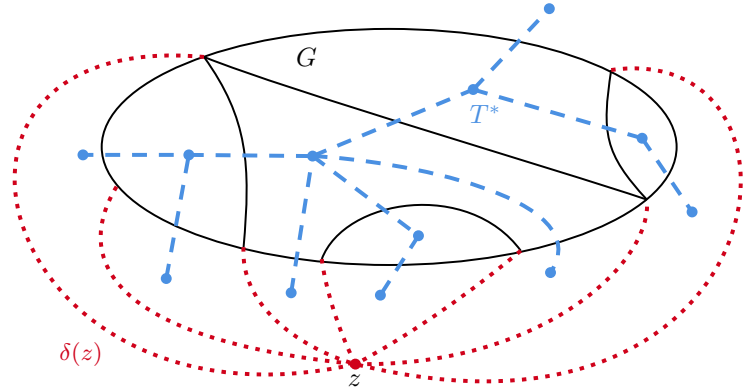
\begin{figure}
\centering
\begin{tikzpicture}[x=0.75pt,y=0.75pt,yscale=-1,xscale=1,scale=0.6]
%uncomment if require: \path (0,358); %set diagram left start at 0, and has height of 358

%Shape: Ellipse [id:dp7689017854566238] 
\draw   (101,132.3) .. controls (101,77.46) and (209.18,33) .. (342.63,33) .. controls (476.08,33) and (584.26,77.46) .. (584.26,132.3) .. controls (584.26,187.14) and (476.08,231.6) .. (342.63,231.6) .. controls (209.18,231.6) and (101,187.14) .. (101,132.3) -- cycle ;
%Curve Lines [id:da6760687093034945] 
\draw    (187,56.6) .. controls (224.26,108.6) and (226.26,147.6) .. (220.26,216.6) ;
%Curve Lines [id:da2003568464150014] 
\draw    (187,56.6) .. controls (259.26,85.6) and (482.26,150.6) .. (559.26,176.6) ;
%Curve Lines [id:da8763907829297133] 
\draw    (286,226.6) .. controls (305.26,175.6) and (409.26,159.6) .. (453.26,219.6) ;
%Curve Lines [id:da3256311666818643] 
\draw    (529.26,67.6) .. controls (521.26,120.6) and (523.26,137.6) .. (559.26,176.6) ;
%Curve Lines [id:da29609207675481164] 
\draw [color={rgb, 255:red, 208; green, 2; blue, 27 }  ,draw opacity=1 ][line width=1.5]  [dash pattern={on 1.69pt off 2.76pt}]  (187,56.6) .. controls (-83.07,31.81) and (11.26,382.98) .. (314.26,314.98) ;
%Curve Lines [id:da11338546375515013] 
\draw [color={rgb, 255:red, 208; green, 2; blue, 27 }  ,draw opacity=1 ][line width=1.5]  [dash pattern={on 1.69pt off 2.76pt}]  (220.26,216.6) .. controls (216.93,273.29) and (281.26,306.07) .. (314.26,314.98) ;
%Curve Lines [id:da1311989710323187] 
\draw [color={rgb, 255:red, 208; green, 2; blue, 27 }  ,draw opacity=1 ][line width=1.5]  [dash pattern={on 1.69pt off 2.76pt}]  (286,226.6) .. controls (287.33,238.28) and (290.75,252.71) .. (295.23,267.2) .. controls (300.61,284.59) and (307.54,302.06) .. (314.26,314.98) ;
%Curve Lines [id:da45566956564946737] 
\draw [color={rgb, 255:red, 208; green, 2; blue, 27 }  ,draw opacity=1 ][line width=1.5]  [dash pattern={on 1.69pt off 2.76pt}]  (453.26,219.6) .. controls (413.93,252.29) and (362.93,289.29) .. (314.26,314.98) ;
%Curve Lines [id:da7983393243048502] 
\draw [color={rgb, 255:red, 208; green, 2; blue, 27 }  ,draw opacity=1 ][line width=1.5]  [dash pattern={on 1.69pt off 2.76pt}]  (314.26,314.98) .. controls (469.5,317.67) and (564.5,230.67) .. (559.26,176.6) ;
%Curve Lines [id:da9669624430926587] 
\draw [color={rgb, 255:red, 208; green, 2; blue, 27 }  ,draw opacity=1 ][line width=1.5]  [dash pattern={on 1.69pt off 2.76pt}]  (529.26,67.6) .. controls (714.5,40.81) and (676.5,405.67) .. (314.26,314.98) ;
%Shape: Circle [id:dp2186082390024886] 
\draw  [color={rgb, 255:red, 208; green, 2; blue, 27 }  ,draw opacity=1 ][fill={rgb, 255:red, 208; green, 2; blue, 27 }  ,fill opacity=1 ] (310.51,314.98) .. controls (310.51,312.91) and (312.19,311.23) .. (314.26,311.23) .. controls (316.34,311.23) and (318.01,312.91) .. (318.01,314.98) .. controls (318.01,317.05) and (316.34,318.73) .. (314.26,318.73) .. controls (312.19,318.73) and (310.51,317.05) .. (310.51,314.98) -- cycle ;
%Shape: Circle [id:dp3137640053126989] 
\draw  [color={rgb, 255:red, 74; green, 144; blue, 226 }  ,draw opacity=1 ][fill={rgb, 255:red, 74; green, 144; blue, 226 }  ,fill opacity=1 ] (274.51,139.98) .. controls (274.51,137.91) and (276.19,136.23) .. (278.26,136.23) .. controls (280.34,136.23) and (282.01,137.91) .. (282.01,139.98) .. controls (282.01,142.05) and (280.34,143.73) .. (278.26,143.73) .. controls (276.19,143.73) and (274.51,142.05) .. (274.51,139.98) -- cycle ;
%Shape: Circle [id:dp4106292535547378] 
\draw  [color={rgb, 255:red, 74; green, 144; blue, 226 }  ,draw opacity=1 ][fill={rgb, 255:red, 74; green, 144; blue, 226 }  ,fill opacity=1 ] (170.51,138.98) .. controls (170.51,136.91) and (172.19,135.23) .. (174.26,135.23) .. controls (176.34,135.23) and (178.01,136.91) .. (178.01,138.98) .. controls (178.01,141.05) and (176.34,142.73) .. (174.26,142.73) .. controls (172.19,142.73) and (170.51,141.05) .. (170.51,138.98) -- cycle ;
%Shape: Circle [id:dp9462972034262815] 
\draw  [color={rgb, 255:red, 74; green, 144; blue, 226 }  ,draw opacity=1 ][fill={rgb, 255:red, 74; green, 144; blue, 226 }  ,fill opacity=1 ] (409.51,83.98) .. controls (409.51,81.91) and (411.19,80.23) .. (413.26,80.23) .. controls (415.34,80.23) and (417.01,81.91) .. (417.01,83.98) .. controls (417.01,86.05) and (415.34,87.73) .. (413.26,87.73) .. controls (411.19,87.73) and (409.51,86.05) .. (409.51,83.98) -- cycle ;
%Shape: Circle [id:dp12391338809801966] 
\draw  [color={rgb, 255:red, 74; green, 144; blue, 226 }  ,draw opacity=1 ][fill={rgb, 255:red, 74; green, 144; blue, 226 }  ,fill opacity=1 ] (551.51,124.98) .. controls (551.51,122.91) and (553.19,121.23) .. (555.26,121.23) .. controls (557.34,121.23) and (559.01,122.91) .. (559.01,124.98) .. controls (559.01,127.05) and (557.34,128.73) .. (555.26,128.73) .. controls (553.19,128.73) and (551.51,127.05) .. (551.51,124.98) -- cycle ;
%Shape: Circle [id:dp3937298604159245] 
\draw  [color={rgb, 255:red, 74; green, 144; blue, 226 }  ,draw opacity=1 ][fill={rgb, 255:red, 74; green, 144; blue, 226 }  ,fill opacity=1 ] (363.51,206.98) .. controls (363.51,204.91) and (365.19,203.23) .. (367.26,203.23) .. controls (369.34,203.23) and (371.01,204.91) .. (371.01,206.98) .. controls (371.01,209.05) and (369.34,210.73) .. (367.26,210.73) .. controls (365.19,210.73) and (363.51,209.05) .. (363.51,206.98) -- cycle ;
%Shape: Circle [id:dp41612067928872554] 
\draw  [color={rgb, 255:red, 74; green, 144; blue, 226 }  ,draw opacity=1 ][fill={rgb, 255:red, 74; green, 144; blue, 226 }  ,fill opacity=1 ] (151.51,242.98) .. controls (151.51,240.91) and (153.19,239.23) .. (155.26,239.23) .. controls (157.34,239.23) and (159.01,240.91) .. (159.01,242.98) .. controls (159.01,245.05) and (157.34,246.73) .. (155.26,246.73) .. controls (153.19,246.73) and (151.51,245.05) .. (151.51,242.98) -- cycle ;
%Shape: Circle [id:dp4416425390220282] 
\draw  [color={rgb, 255:red, 74; green, 144; blue, 226 }  ,draw opacity=1 ][fill={rgb, 255:red, 74; green, 144; blue, 226 }  ,fill opacity=1 ] (256.51,254.98) .. controls (256.51,252.91) and (258.19,251.23) .. (260.26,251.23) .. controls (262.34,251.23) and (264.01,252.91) .. (264.01,254.98) .. controls (264.01,257.05) and (262.34,258.73) .. (260.26,258.73) .. controls (258.19,258.73) and (256.51,257.05) .. (256.51,254.98) -- cycle ;
%Shape: Circle [id:dp09878124410276068] 
\draw  [color={rgb, 255:red, 74; green, 144; blue, 226 }  ,draw opacity=1 ][fill={rgb, 255:red, 74; green, 144; blue, 226 }  ,fill opacity=1 ] (331.51,256.98) .. controls (331.51,254.91) and (333.19,253.23) .. (335.26,253.23) .. controls (337.34,253.23) and (339.01,254.91) .. (339.01,256.98) .. controls (339.01,259.05) and (337.34,260.73) .. (335.26,260.73) .. controls (333.19,260.73) and (331.51,259.05) .. (331.51,256.98) -- cycle ;
%Shape: Circle [id:dp7709473278561316] 
\draw  [color={rgb, 255:red, 74; green, 144; blue, 226 }  ,draw opacity=1 ][fill={rgb, 255:red, 74; green, 144; blue, 226 }  ,fill opacity=1 ] (474.51,237.98) .. controls (474.51,235.91) and (476.19,234.23) .. (478.26,234.23) .. controls (480.34,234.23) and (482.01,235.91) .. (482.01,237.98) .. controls (482.01,240.05) and (480.34,241.73) .. (478.26,241.73) .. controls (476.19,241.73) and (474.51,240.05) .. (474.51,237.98) -- cycle ;
%Shape: Circle [id:dp7740394781989917] 
\draw  [color={rgb, 255:red, 74; green, 144; blue, 226 }  ,draw opacity=1 ][fill={rgb, 255:red, 74; green, 144; blue, 226 }  ,fill opacity=1 ] (593.51,186.98) .. controls (593.51,184.91) and (595.19,183.23) .. (597.26,183.23) .. controls (599.34,183.23) and (601.01,184.91) .. (601.01,186.98) .. controls (601.01,189.05) and (599.34,190.73) .. (597.26,190.73) .. controls (595.19,190.73) and (593.51,189.05) .. (593.51,186.98) -- cycle ;
%Shape: Circle [id:dp24147844864062518] 
\draw  [color={rgb, 255:red, 74; green, 144; blue, 226 }  ,draw opacity=1 ][fill={rgb, 255:red, 74; green, 144; blue, 226 }  ,fill opacity=1 ] (473.51,15.98) .. controls (473.51,13.91) and (475.19,12.23) .. (477.26,12.23) .. controls (479.34,12.23) and (481.01,13.91) .. (481.01,15.98) .. controls (481.01,18.05) and (479.34,19.73) .. (477.26,19.73) .. controls (475.19,19.73) and (473.51,18.05) .. (473.51,15.98) -- cycle ;
%Straight Lines [id:da4922053815268119] 
\draw [color={rgb, 255:red, 74; green, 144; blue, 226 }  ,draw opacity=1 ][line width=1.5]  [dash pattern={on 5.63pt off 4.5pt}]  (174.26,138.98) -- (278.26,139.98) ;
%Straight Lines [id:da7135152298245226] 
\draw [color={rgb, 255:red, 74; green, 144; blue, 226 }  ,draw opacity=1 ][line width=1.5]  [dash pattern={on 5.63pt off 4.5pt}]  (278.26,139.98) -- (413.26,83.98) ;
%Straight Lines [id:da5042407565030069] 
\draw [color={rgb, 255:red, 74; green, 144; blue, 226 }  ,draw opacity=1 ][line width=1.5]  [dash pattern={on 5.63pt off 4.5pt}]  (278.26,139.98) -- (367.26,206.98) ;
%Straight Lines [id:da47057138801095055] 
\draw [color={rgb, 255:red, 74; green, 144; blue, 226 }  ,draw opacity=1 ][line width=1.5]  [dash pattern={on 5.63pt off 4.5pt}]  (260.26,254.98) -- (278.26,139.98) ;
%Straight Lines [id:da042610417717335425] 
\draw [color={rgb, 255:red, 74; green, 144; blue, 226 }  ,draw opacity=1 ][line width=1.5]  [dash pattern={on 5.63pt off 4.5pt}]  (413.26,83.98) -- (477.26,15.98) ;
%Straight Lines [id:da50671280670075] 
\draw [color={rgb, 255:red, 74; green, 144; blue, 226 }  ,draw opacity=1 ][line width=1.5]  [dash pattern={on 5.63pt off 4.5pt}]  (367.26,206.98) -- (335.26,256.98) ;
%Straight Lines [id:da6892276189427677] 
\draw [color={rgb, 255:red, 74; green, 144; blue, 226 }  ,draw opacity=1 ][line width=1.5]  [dash pattern={on 5.63pt off 4.5pt}]  (155.26,242.98) -- (174.26,138.98) ;
%Straight Lines [id:da3692216632227383] 
\draw [color={rgb, 255:red, 74; green, 144; blue, 226 }  ,draw opacity=1 ][line width=1.5]  [dash pattern={on 5.63pt off 4.5pt}]  (413.26,83.98) -- (555.26,124.98) ;
%Straight Lines [id:da8391658739771184] 
\draw [color={rgb, 255:red, 74; green, 144; blue, 226 }  ,draw opacity=1 ][line width=1.5]  [dash pattern={on 5.63pt off 4.5pt}]  (555.26,124.98) -- (597.26,186.98) ;
%Straight Lines [id:da7603829816089587] 
\draw [color={rgb, 255:red, 74; green, 144; blue, 226 }  ,draw opacity=1 ][line width=1.5]  [dash pattern={on 5.63pt off 4.5pt}]  (81.51,138.98) -- (174.26,138.98) ;
%Shape: Circle [id:dp16985901657159697] 
\draw  [color={rgb, 255:red, 74; green, 144; blue, 226 }  ,draw opacity=1 ][fill={rgb, 255:red, 74; green, 144; blue, 226 }  ,fill opacity=1 ] (81.51,138.98) .. controls (81.51,136.91) and (83.19,135.23) .. (85.26,135.23) .. controls (87.34,135.23) and (89.01,136.91) .. (89.01,138.98) .. controls (89.01,141.05) and (87.34,142.73) .. (85.26,142.73) .. controls (83.19,142.73) and (81.51,141.05) .. (81.51,138.98) -- cycle ;
%Curve Lines [id:da612260727122159] 
\draw [color={rgb, 255:red, 208; green, 2; blue, 27 }  ,draw opacity=1 ][line width=1.5]  [dash pattern={on 1.69pt off 2.76pt}]  (113.5,165.67) .. controls (76.93,226.29) and (106.5,293.95) .. (314.26,314.98) ;
%Curve Lines [id:da12087277825228981] 
\draw [color={rgb, 255:red, 74; green, 144; blue, 226 }  ,draw opacity=1 ][line width=1.5]  [dash pattern={on 5.63pt off 4.5pt}]  (278.26,139.98) .. controls (349.52,143.3) and (520.52,174.3) .. (478.26,237.98) ;

% Text Node
\draw (305.51,321.38) node [anchor=north west][inner sep=0.75pt]  [font=\normalsize]  {$z$};
% Text Node
\draw (262,46.39) node [anchor=north west][inner sep=0.75pt]  [font=\normalsize]  {$G$};
% Text Node
\draw (408,94.39) node [anchor=north west][inner sep=0.75pt]  [font=\normalsize,color={rgb, 255:red, 74; green, 144; blue, 226 }  ,opacity=1 ]  {$T^{*}$};
% Text Node
\draw (63,293.6) node [anchor=north west][inner sep=0.75pt]  [color={rgb, 255:red, 208; green, 2; blue, 27 }  ,opacity=1 ]  {$\delta ( z)$};
\end{tikzpicture}
\caption{
The solid edges form the outerplanar graph $G$,
and the dotted edges are the edges incident to the apex node $z$ in $G_z$.
The dashed edges form the dual tree $T^*$.
}
\label{fig:op-dual}
\end{figure}

Note that $\delta(z)=\{(z,v):v\in V\}$ are the edges of a spanning tree of $G_z$, so
$E(G_z)^*\setminus\delta(z)^*$ are the edges of a spanning tree $T^*$ of $G_z^*$.
Each non-leaf node of $T^*$ corresponds to an inner face of $G$, and each leaf of
$T^*$ corresponds to a face of $G_z$ whose boundary contains the apex node $z$.
Also note that we obtain $G^*$ if we combine all the leaves of $T^*$ into a single
node (which would correspond to the outer face of $G$). We will call $T^*$ the dual
tree of the outerplanar graph $G$ (Figure \ref{fig:op-dual}).

Let a central cut of $G$ be a cut $\delta(S)$ such that both of its shores $S$ and
$V\setminus S$ induced  connected subgraphs of $G$. Hence, the shores of a central cut are subpaths of
the outer cycle, so the dual of $\delta(S)$ is a leaf-to-leaf path in $T^*$. Since
the edges of any cut in a connected graph is a disjoint union of central cuts, it suffices to
only consider central cuts.

We want to find a strictly embedded cut-sparsifier $T=(V,F,u^*)$ of $G$ (ie. a spanning
tree $T$ of $G$ with edges weights $u^*$) such that for any nonempty $X\subsetneq V$,
we have
\begin{equation}
\alpha u(\delta_G(X)) \le u^*(\delta_T(X)) \le \beta u(\delta_G(X)) .
\label{cut-sparsifier}
\end{equation}
In the above inequality, we can replace $u^*(\delta_T(X))$ with $u^*(\delta_G(X))$
if we set $u^*(e)=0$ for each edge $e\notin E(T)$. In the dual tree (of $G$),
$\delta_G(X)^*$ is a leaf-to-leaf path for any central cut $\delta(X)$,
so inequality \eqref{cut-sparsifier} is equivalent to
\begin{equation}
\alpha u(P) \le u^*(P) \le \beta u(P)
\label{distance-sparsifier}
\end{equation}
for any leaf-to-leaf path $P$ in $T^*$.

Finally, we give a sufficient property on the weights $u^*$ assigned to the
edges such that all edges of positive weight are in the spanning tree of $G$.
Recall that the dual of the edges not in the spanning tree of $G$ would
form a spanning tree of $G^*$. Since we assign weight 0 to edges not in the
spanning tree of $G$, it is sufficient for the 0 weight edges to form a
spanning subgraph of $G^*$. Since $G^*$ is obtained by combining the leaves
of $T^*$ into a single node, it suffices for each node $v\in V(T^*)$ to
have a 0 weight path from $v$ to a leaf of $T^*$.

\subsection{An algorithm to build a distance-sparsifier of a tree}
\label{sec:non-conservative}

In this section, we present an algorithm to obtain a distance-sparsifier
of a tree. In particular, this allows us to obtain a cut-approximator of
an outerplanar graph from a distance-sparsifier of its dual tree.

Let $T=(V,E,u)$ be a weighted tree where $u:E\to\mathbb{R}^+$ is the
length function on $T$. Let $L\subset V$ be the leaves of $T$. We assign
non-negative weights $u^*$ to the edges of $T$. Let $d$ be the shortest
path metric induced by the original weights $u$, and let $d^*$ be the
shortest path metric induced by the new weights $u^*$. We want the following
two conditions to hold:
\begin{enumerate}
    \item there exists a 0 weight path from each $v\in V$ to a leaf of $T$.
    \item for any two leaves $x,y\in L$, we have
    \begin{equation}
    \frac14 d(x,y) \le d^*(x,y) \le 2 d(x,y) .
    \label{tree-bounds}
    \end{equation}
\end{enumerate}

We define $u^*$ recursively as follows. Let $r$ be a non-leaf node of $T$
(we are done if no such nodes exist), and consider $T$ to be rooted at
$r$. For $v\in V$, let $T(v)$ denote the subtree rooted at $v$, and let $h(v)$
denote the \emph{height} of $v$, defined by $h(v)=\min\{d(v,x):x\in L\cap T(v)\}$. Now,
let $r_1, ..., r_k$ be the points in $T$ that are at distance exactly $h(r)/2$
from $r$. Without loss of generality, suppose that each $r_i$ is a node
(otherwise we can subdivide the edge to get a node), and order the $r_i$'s
by increasing $h(r_i)$, that is $h(r_{i-1})\le h(r_i)$ for each $i=2,...,k$.
Furthermore, suppose that we have already assigned weights to the edges in
each subtree $T(r_i)$ using this algorithm, so it remains to assign weights
to the edges not in any of these subtrees. We assign a weight of $h(r_i)$ to
the first edge on the path from $r_i$ to $r$ for each $i=2,...,k$, and weight
0 to all other edges (Figure \ref{fig:algorithm}).
In particular, all edges on the path from $r_1$ to $r$ receive weight $0$.
This algorithm terminates because the length of the
longest path from the root to a leaf decreases by at least half the length
of the shortest edge incident to a leaf in each iteration.

% \textcolor{orange}{add picture: algorithm}

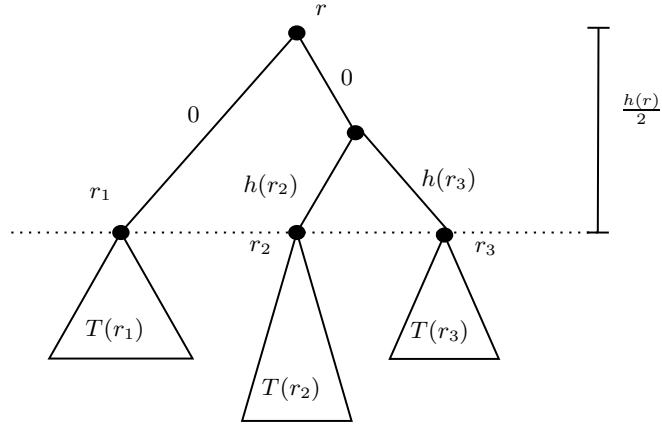
\begin{figure}
\centering
\begin{tikzpicture}[x=0.75pt,y=0.75pt,yscale=-1,xscale=1]
%uncomment if require: \path (0,300); %set diagram left start at 0, and has height of 300

%Shape: Ellipse [id:dp5985137917826004] 
\draw  [fill={rgb, 255:red, 0; green, 0; blue, 0 }  ,fill opacity=1 ] (145.38,47.73) .. controls (145.38,45.9) and (147.12,44.42) .. (149.27,44.42) .. controls (151.41,44.42) and (153.15,45.9) .. (153.15,47.73) .. controls (153.15,49.56) and (151.41,51.04) .. (149.27,51.04) .. controls (147.12,51.04) and (145.38,49.56) .. (145.38,47.73) -- cycle ;
%Straight Lines [id:da6680434905874217] 
\draw  [dash pattern={on 0.84pt off 2.51pt}]  (5.26,148.48) -- (301.05,148.48) ;
%Shape: Ellipse [id:dp5686654506515438] 
\draw  [fill={rgb, 255:red, 0; green, 0; blue, 0 }  ,fill opacity=1 ] (56.64,148.48) .. controls (56.64,146.65) and (58.38,145.16) .. (60.53,145.16) .. controls (62.68,145.16) and (64.42,146.65) .. (64.42,148.48) .. controls (64.42,150.31) and (62.68,151.79) .. (60.53,151.79) .. controls (58.38,151.79) and (56.64,150.31) .. (56.64,148.48) -- cycle ;
%Shape: Ellipse [id:dp8575462086502181] 
\draw  [fill={rgb, 255:red, 0; green, 0; blue, 0 }  ,fill opacity=1 ] (219.87,149.74) .. controls (219.87,147.91) and (221.61,146.42) .. (223.76,146.42) .. controls (225.91,146.42) and (227.65,147.91) .. (227.65,149.74) .. controls (227.65,151.56) and (225.91,153.05) .. (223.76,153.05) .. controls (221.61,153.05) and (219.87,151.56) .. (219.87,149.74) -- cycle ;
%Shape: Ellipse [id:dp7080861177955666] 
\draw  [fill={rgb, 255:red, 0; green, 0; blue, 0 }  ,fill opacity=1 ] (145.38,148.48) .. controls (145.38,146.65) and (147.12,145.16) .. (149.27,145.16) .. controls (151.41,145.16) and (153.15,146.65) .. (153.15,148.48) .. controls (153.15,150.31) and (151.41,151.79) .. (149.27,151.79) .. controls (147.12,151.79) and (145.38,150.31) .. (145.38,148.48) -- cycle ;
%Shape: Ellipse [id:dp8582710831172611] 
\draw  [fill={rgb, 255:red, 0; green, 0; blue, 0 }  ,fill opacity=1 ] (174.96,98.1) .. controls (174.96,96.28) and (176.7,94.79) .. (178.84,94.79) .. controls (180.99,94.79) and (182.73,96.28) .. (182.73,98.1) .. controls (182.73,99.93) and (180.99,101.42) .. (178.84,101.42) .. controls (176.7,101.42) and (174.96,99.93) .. (174.96,98.1) -- cycle ;
%Straight Lines [id:da1445574006940209] 
\draw    (149.27,47.73) -- (178.84,98.1) ;
%Straight Lines [id:da40767757688034245] 
\draw    (60.53,148.48) -- (149.27,47.73) ;
%Straight Lines [id:da48066186599993865] 
\draw    (150.74,145.96) -- (180.32,95.59) ;
%Straight Lines [id:da010552627774992107] 
\draw    (225.24,147.22) -- (180.32,95.59) ;
%Shape: Triangle [id:dp1209005319962384] 
\draw   (60.53,148.48) -- (96.68,212.09) -- (24.38,212.09) -- cycle ;
%Shape: Triangle [id:dp32139347751602965] 
\draw   (149.27,148.48) -- (176.87,243.57) -- (121.66,243.57) -- cycle ;
%Shape: Triangle [id:dp5168207359663717] 
\draw   (223.65,149.74) -- (251.15,212.27) -- (196.15,212.27) -- cycle ;
%Straight Lines [id:da5704820521699823] 
\draw    (301.41,45.14) -- (301.05,148.48) ;
\draw [shift={(301.05,148.48)}, rotate = 270.2] [color={rgb, 255:red, 0; green, 0; blue, 0 }  ][line width=0.75]    (0,5.59) -- (0,-5.59)   ;
\draw [shift={(301.41,45.14)}, rotate = 270.2] [color={rgb, 255:red, 0; green, 0; blue, 0 }  ][line width=0.75]    (0,5.59) -- (0,-5.59)   ;

% Text Node
\draw (43.21,124.3) node [anchor=north west][inner sep=0.75pt]    {$r_{1}$};
% Text Node
\draw (123.94,151.24) node [anchor=north west][inner sep=0.75pt]    {$r_{2}$};
% Text Node
\draw (237.48,151.79) node [anchor=north west][inner sep=0.75pt]    {$r_{3}$};
% Text Node
\draw (157.17,32.13) node [anchor=north west][inner sep=0.75pt]    {$r$};
% Text Node
\draw (310.76,76.38) node [anchor=north west][inner sep=0.75pt]    {$\frac{h( r)}{2}$};
% Text Node
\draw (120.92,117.48) node [anchor=north west][inner sep=0.75pt]  [font=\small]  {$h( r_{2})$};
% Text Node
\draw (210.64,114.48) node [anchor=north west][inner sep=0.75pt]  [font=\small]  {$h( r_{3})$};
% Text Node
\draw (169.88,64.56) node [anchor=north west][inner sep=0.75pt]    {$0$};
% Text Node
\draw (92.65,83.7) node [anchor=north west][inner sep=0.75pt]    {$0$};
% Text Node
\draw (41.25,190.2) node [anchor=north west][inner sep=0.75pt]    {$T( r_{1})$};
% Text Node
\draw (130.19,220.15) node [anchor=north west][inner sep=0.75pt]    {$T( r_{2})$};
% Text Node
\draw (205.09,191.39) node [anchor=north west][inner sep=0.75pt]    {$T( r_{3})$};
\end{tikzpicture}
\caption{
The algorithm assigns weights to the edges above $r_1,...,r_k$,
and is run recursively on the subtrees $T(r_1),...,T(r_k)$.
}
\label{fig:algorithm}
\end{figure}

Since we assign 0 weight to edges on the $r_1r$ path,
Condition 1 is satisfied for all nodes above the $r_i$'s in the tree by construction. It remains to prove Condition 2.
We  use the following upper and lower bounds. For each leaf $x\in L$,
\begin{align}
    d^*(x,r) &\le 2d(x,r) - h(r) \label{upper-bound} , \\
    d^*(x,r) &\ge d(x,r) - h(r) \label{lower-bound} .
\end{align}

We prove the upper bound in \eqref{upper-bound} by induction. We are
done if $T$ only has 0 weight edges, and the cases that cause the algorithm
to terminate will only have 0 weight edges. For the induction, we consider
two separate cases depending on whether $x\in T(r_1)$.

\textbf{Case 1}: $x\in T(r_1)$.
\begin{align*}
d^*(x, r)
&= d^*(x, r_1) + d^*(r_1, r)
&& \textrm{($r_1$ is between $x$ and $r$)} \\
&= d^*(x, r_1)
&& \textrm{(by definition of $u^*$)} \\
&\le 2d(x, r_1) - h(r_1)
&& \textrm{(by induction)} \\
&= 2d(x, r) - 2d(r, r_1) - h(r_1)
&& \textrm{($r_1$ is between $x$ and $r$)} \\
&= 2d(x, r) - \frac32 h(r)
&& \textrm{($h(r_1)=h(r)/2$ by definition of $r_1$)} \\
&\le 2d(x, r) - h(r)
\end{align*}

\textbf{Case 2}: $x\in T(r_i)$ for some $i\neq1$.
\begin{align*}
d^*(x, r)
&= d^*(x, r_i) + d^*(r_i, r)
&& \textrm{($r_i$ is between $x$ and $r$)} \\
&= d^*(x, r_i) + h(r_i)
&& \textrm{(by definition of $u^*$)} \\
&\le 2d(x, r_i) - h(r_i) + h(r_i)
&& \textrm{(by induction)} \\
&= 2d(x, r) - 2d(r_i, r)
&& \textrm{($r_i$ is between $x$ and $r$)} \\
&= 2d(x, r) - h(r)
&& \textrm{($d(r_i, r) = h(r)/2$ by definition of $r_i$)}
\end{align*}

This proves inequality \eqref{upper-bound}.

We prove the lower bound in \eqref{lower-bound} similarly.

\textbf{Case 1}: $x \in T(r_1)$.
\begin{align*}
d^*(x, r)
&= d^*(x, r_1) + d^*(r_1, r)
&& \textrm{($r_1$ is between $x$ and $r$)} \\
&= d^*(x, r_1)
&& \textrm{(by definition of $u^*$)} \\
&\ge d(x, r_1) - h(r_1)
&& \textrm{(by induction)} \\
&= d(x, r) - d(r, r_1) - h(r_1)
&& \textrm{($r_1$ is between $x$ and $r$)} \\
&= d(x, r) - h(r)
&& \textrm{(by definition of $r_1$)}
\end{align*}

\textbf{Case 2}: $x \in T(r_i)$ for some $i\neq1$.
\begin{align*}
d^*(x, r)
&= d^*(x, r_i) + d^*(r_i, r)
&& \textrm{($r_i$ is between $x$ and $r$)} \\
&= d^*(x, r_i) + h(r_i)
&& \textrm{(by definition of $u^*$)} \\
&\ge d(x, r_i) - h(r_i) + h(r_i)
&& \textrm{(by induction)} \\
&= d(x, r) - d(r_i, r)
&& \textrm{($r_i$ is between $x$ and $r$)} \\
&= d(x, r) - h(r)/2
&& \textrm{($d(r_i, r) = h(r)/2$ by definition of $r_i$)} \\
&\ge d(x, r) - h(r)
\end{align*}

This proves inequality \eqref{lower-bound}

Finally, we prove property 2, that is inequality \eqref{tree-bounds},
by induction. Let $x,y\in L$ be two leaves of $T$. Suppose that
$x\in T(r_i)$ and $y\in T(r_j)$. By induction, we may assume that
$i\neq j$, so without loss of generality, suppose that $i<j$.

We prove the upper bound.
\begin{align*}
d^*(x, y) &= d^*(x, r_i) + d^*(r_i, r_j) + d^*(r_j, y) \\
&\le 2d(x, r_i) - h(r_i)  + 2d(y, r_j) - h(r_j) + d^*(r_i, r_j)
&& \textrm{(by \eqref{upper-bound})} \\
&\le 2d(x, r_i) - h(r_i)  + 2d(y, r_j) - h(r_j) + h(r_i) + h(r_j)
&& \textrm{(by definition of $u^*$)} \\
&= 2d(x, r_i) + 2d(y, r_j) \\
&\le 2d(x, y)
\end{align*}

We prove the lower bound.
\begin{align*}
d(x, y)
&= d(x, r_i) + d(r_i, r_j) + d(r_j, y) \\
&\le d(x, r_i) + d(r_j, y) + h(r_i) + h(r_j) && \\
&\qquad\qquad \textrm{(because $d(r, r_i)=h(r)/2\le h(r_i)$ for all $i\in[k]$)} && \\
&\le 2d(x, r_i) + 2d(r_j, y)
&& \textrm{(by definition of $h$)} \\
&\le 2d^*(x, r_i) + 2h(r_i) + 2d^*(y, r_j) + 2h(r_j)
&& \textrm{(by \eqref{lower-bound})} \\
&= 2d^*(x, y) - 2d^*(r_i, r_j) + 2h(r_i) + 2h(r_j) .
\end{align*}
Now we finish the proof of the lower bound by considering two cases.

\textbf{Case 1}: $i = 1$, that is $x$ is in the first subtree.
\begin{align*}
d(x, y)
&\le 2d^*(x, y) - 2d^*(r_1, r_j) + 2h(r_1) + 2h(r_j) \\
&= 2d^*(x, y) - 2h(r_j) + 2h(r_1) + 2h(r_j)
&& \textrm{(by definition of $u^*$)} \\
&\le 2d^*(x, y) + 2h(r_1) \\
&\le 4d^*(x, y)
\end{align*}

\textbf{Case 2}: $i > 1$, that is neither $x$ nor $y$ is in the first subtree.
\begin{align*}
d(x, y)
&\le 2d^*(x, y) - 2d^*(r_i, r_j) + 2h(r_i) + 2h(r_j) \\
&= 2d^*(x, y) - 2h(r_i) - 2h(r_j) + 2h(r_i) + 2h(r_j)
&& \textrm{(by definition of $u^*$)} \\
&= 2d^*(x, y)
\end{align*}

This completes the proof of property 2.

\section{Maximum Weight Disjoint Paths} % and All-or-Nothing Flows}

In this section we  prove our main result for {\sc edp}, Theorem~\ref{thm:edp}.
%We start with some key results we need.

\subsection{Required Elements}
\label{sec:required}

We first prove the following result  which establishes conditions for when 
the cut condition implies routability.  
\iffalse
\begin{restatable}{theorem}{OProute}
\label{thm:OP}
Let $G$ be an outerplanar graph with integer edge capacities $u(e)$.
Let $H$ denote a demand graph such that  $G+H=(V(G),E(G)\cup E(H))$
is outerplanar. If  $G,H$ satisfies the cut condition, then $H$ is routable in $G$.
\end{restatable}
\fi

\OProute*

The novelty in this statement is that we do not require the Eulerian condition
on $G+H$. This condition is needed in virtually all classical results for edge-disjoint paths. In fact, even when $G$ is a $4$-cycle and $H$ consists of a matching of size $2$, the cut condition need not be sufficient to guarantee routability. The main exception is the case when $G$ is a tree and a trivial greedy algorithm suffices to route $H$. We prove the theorem by giving a simple (but not so simple) algorithm to compute a  routing.
%\footnote{We are not  presently aware of Theorem~\ref{thm:OP} being stated in the literature or even following as a special case of a known theorem for disjoint paths.}

To prove this theorem, we  need the following $2$-node reduction lemma which is generally known.
% We defer a proof  to the long version of the paper.
\begin{lemma}
\label{lemma:2con-cc}
Let $G$ be a graph and let $H$ be a collection of demands that satisfies the cut condition.
Let $G_1,...,G_k$ be the blocks of $G$ (the 2-node connected components and the cut edges (aka bridges) of $G$).
Let $H_i$ be the collection of nontrivial (i.e., non-loop) demands after contracting
each edge $e\in E(G)\setminus E(G_i)$.
Then each $G_i,H_i$ satisfies the cut condition.
Furthermore, if $G$ (or $G+H$) was outerplanar (or planar),
then each $G_i$ (resp. $G_i+H_i$) is outerplanar (resp. planar).
Moreover, if each $H_i$ is routable in $G_i$, then $H$ is routable in $G$.
\end{lemma}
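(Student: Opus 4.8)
The plan is to exploit the block--cut tree structure of $G$. Recall that the blocks $G_1,\dots,G_k$ \emph{partition} the edge set $E(G)$, that two blocks meet in at most one vertex (a cut vertex), and that contracting $E(G)\setminus E(G_i)$ collapses, for each cut vertex $c$ of $G_i$, the entire part of $G$ hanging off $c$ away from $G_i$ onto $c$ itself. Thus the contraction leaves $G_i$ unchanged and induces a well-defined \emph{attachment map} $\pi_i:V(G)\to V(G_i)$ that sends each vertex $v$ to the unique vertex of $G_i$ separating $v$ from $G_i$ (with $\pi_i(v)=v$ for $v\in V(G_i)$); the demand $s_jt_j$ becomes the pair $\pi_i(s_j)\pi_i(t_j)$, which belongs to $H_i$ exactly when it is nontrivial. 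First I would record these structural facts (including that distinct cut vertices of $G_i$ are not merged, which follows from maximality of the block), since all three conclusions rest on them.

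For the cut condition I would \emph{lift} a cut. Given a proper nonempty $Y\subseteq V(G_i)$, set $X=\pi_i^{-1}(Y)\subseteq V(G)$. The key observation is that every edge lying in a block other than $G_i$ has both endpoints with the same attachment point, hence does not cross $X$; therefore $\delta_G(X)=\delta_{G_i}(Y)$ and $u(\delta_G(X))=u(\delta_{G_i}(Y))$. Moreover a demand $s_jt_j$ crosses $X$ iff exactly one of $\pi_i(s_j),\pi_i(t_j)$ lies in $Y$, i.e. iff its image crosses $Y$ in $H_i$, so the demand crossing the two cuts agrees. The cut condition for $(G,H)$ applied to $X$ then yields the cut condition for $(G_i,H_i)$ at $Y$. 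The planarity/outerplanarity claim follows by a minor argument: $G_i$ is a subgraph of $G$, and $G_i+H_i$ is obtained from $G+H$ by contracting the $G$-edges $E(G)\setminus E(G_i)$ and deleting loops (and redundant parallel edges), hence is a minor of $G+H$; since both classes are minor-closed, the conclusion is immediate.

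For routability I would concatenate the routings block by block. Each demand $s_jt_j$ determines a path $B_0,\dots,B_m$ in the block--cut tree, with successive blocks sharing cut vertices $c_1,\dots,c_m$; one checks that the contracted image of $s_jt_j$ is $s_j$--$c_1$ in $B_0$, is $c_\ell$--$c_{\ell+1}$ in an interior block $B_\ell$, and is $c_m$--$t_j$ in $B_m$ (and is trivial in every block off this path). Given edge-disjoint routings of each $H_i$ in $G_i$, splicing the corresponding sub-paths at the shared cut vertices produces an $s_j$--$t_j$ walk in $G$, which can be shortcut to a simple path. The crucial point making the whole family globally edge-disjoint is precisely that the blocks partition $E(G)$: paths living in distinct blocks are automatically disjoint, while within a single block disjointness is given.

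The step I expect to require the most care is the bookkeeping in the first and third conclusions: verifying that $\pi_i$ is well defined (distinct cut vertices of $G_i$ are never merged, which is where maximality of the block enters) and that the contracted demands match exactly the block-tree projections, so that the spliced sub-paths meet head-to-tail at the correct cut vertices. Everything else is a direct consequence of the edge-partition property of blocks and of the minor-closedness of (outer)planarity.
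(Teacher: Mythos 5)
Your proposal is correct and takes essentially the same route as the paper's proof: transferring the cut condition via the contraction (equivalently, lifting each cut of $G_i+H_i$ to one of $G+H$), invoking minor-closedness for (outer)planarity, and concatenating the per-block routings along the path of blocks that each demand spawns. You simply make explicit, via the attachment map and the block--cut tree, the bookkeeping that the paper leaves as ``easy to see.''
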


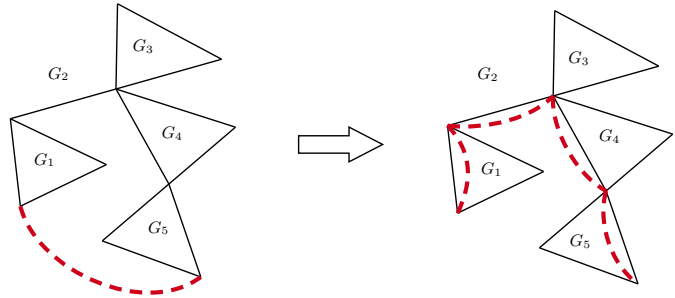
\begin{figure}[htbp]
\centering
\scalebox{0.7}{
\begin{tikzpicture}[x=0.75pt,y=0.75pt,yscale=-1,xscale=1]
%uncomment if require: \path (0,300); %set diagram left start at 0, and has height of 300

%Straight Lines [id:da9488791438096895] 
\draw    (120.4,91.54) -- (196.52,70.07) ;
%Straight Lines [id:da15569981365264218] 
\draw    (44.29,113) -- (51.52,176.07) ;
%Straight Lines [id:da3611964223601132] 
\draw    (44.29,113) -- (120.4,91.54) ;
%Straight Lines [id:da6729455845855292] 
\draw    (120.4,91.54) -- (121.52,30.07) ;
%Straight Lines [id:da3807079541394556] 
\draw    (196.52,70.07) -- (121.52,30.07) ;
%Straight Lines [id:da980343574535619] 
\draw    (51.52,176.07) -- (113.52,146.07) ;
%Straight Lines [id:da6289842498261695] 
\draw    (158.52,160.07) -- (206.52,119.07) ;
%Straight Lines [id:da5275833404497041] 
\draw    (120.4,91.54) -- (206.52,119.07) ;
%Straight Lines [id:da5604325512333144] 
\draw    (120.4,91.54) -- (158.52,160.07) ;
%Straight Lines [id:da10904920835916321] 
\draw    (44.29,113) -- (113.52,146.07) ;
%Right Arrow [id:dp093446424707053] 
\draw   (252,125.27) -- (288.31,125.27) -- (288.31,119) -- (312.52,131.54) -- (288.31,144.07) -- (288.31,137.81) -- (252,137.81) -- cycle ;
%Straight Lines [id:da22093919253015426] 
\draw    (158.52,160.07) -- (181.52,227.07) ;
%Straight Lines [id:da3865649192485403] 
\draw    (181.52,227.07) -- (110.52,201.07) ;
%Straight Lines [id:da9323579782678151] 
\draw    (110.52,201.07) -- (158.52,160.07) ;
%Curve Lines [id:da34364061150471104] 
\draw [color={rgb, 255:red, 208; green, 2; blue, 27 }  ,draw opacity=1 ][line width=2.25]  [dash pattern={on 6.75pt off 4.5pt}]  (51.52,176.07) .. controls (63.52,223.07) and (141.52,257.07) .. (181.52,227.07) ;
%Straight Lines [id:da8690894138530538] 
\draw    (435.4,96.54) -- (511.52,75.07) ;
%Straight Lines [id:da8565384927249076] 
\draw    (359.29,118) -- (366.52,181.07) ;
%Straight Lines [id:da8142725030458322] 
\draw    (359.29,118) -- (435.4,96.54) ;
%Straight Lines [id:da7212613465210478] 
\draw    (435.4,96.54) -- (436.52,35.07) ;
%Straight Lines [id:da3470620140606606] 
\draw    (511.52,75.07) -- (436.52,35.07) ;
%Straight Lines [id:da04906689233450079] 
\draw    (366.52,181.07) -- (428.52,151.07) ;
%Straight Lines [id:da9902206254948336] 
\draw    (473.52,165.07) -- (521.52,124.07) ;
%Straight Lines [id:da7859822776061252] 
\draw    (435.4,96.54) -- (521.52,124.07) ;
%Straight Lines [id:da3390783530284689] 
\draw    (435.4,96.54) -- (473.52,165.07) ;
%Straight Lines [id:da14855238515198832] 
\draw    (359.29,118) -- (428.52,151.07) ;
%Straight Lines [id:da3038428113846807] 
\draw    (473.52,165.07) -- (496.52,232.07) ;
%Straight Lines [id:da2102401694816809] 
\draw    (496.52,232.07) -- (425.52,206.07) ;
%Straight Lines [id:da8918149007207601] 
\draw    (425.52,206.07) -- (473.52,165.07) ;
%Curve Lines [id:da28900582733983704] 
\draw [color={rgb, 255:red, 208; green, 2; blue, 27 }  ,draw opacity=1 ][line width=2.25]  [dash pattern={on 6.75pt off 4.5pt}]  (366.52,181.07) .. controls (376.52,160.07) and (378.52,138.07) .. (359.29,118) ;
%Curve Lines [id:da9096133712255001] 
\draw [color={rgb, 255:red, 208; green, 2; blue, 27 }  ,draw opacity=1 ][line width=2.25]  [dash pattern={on 6.75pt off 4.5pt}]  (473.52,165.07) .. controls (467.52,186.07) and (475.52,217.07) .. (496.52,232.07) ;
%Curve Lines [id:da9074031099323332] 
\draw [color={rgb, 255:red, 208; green, 2; blue, 27 }  ,draw opacity=1 ][line width=2.25]  [dash pattern={on 6.75pt off 4.5pt}]  (435.4,96.54) .. controls (433.52,116.07) and (448.52,151.07) .. (473.52,165.07) ;
%Curve Lines [id:da038754458426880634] 
\draw [color={rgb, 255:red, 208; green, 2; blue, 27 }  ,draw opacity=1 ][line width=2.25]  [dash pattern={on 6.75pt off 4.5pt}]  (359.29,118) .. controls (389.52,121.07) and (420.52,112.07) .. (435.4,96.54) ;

% Text Node
\draw (60,136.4) node [anchor=north west][inner sep=0.75pt]    {$G_{1}$};
% Text Node
\draw (70,74.4) node [anchor=north west][inner sep=0.75pt]    {$G_{2}$};
% Text Node
\draw (131,54.4) node [anchor=north west][inner sep=0.75pt]    {$G_{3}$};
% Text Node
\draw (152,117.4) node [anchor=north west][inner sep=0.75pt]    {$G_{4}$};
% Text Node
\draw (142,185.4) node [anchor=north west][inner sep=0.75pt]    {$G_{5}$};
% Text Node
\draw (467,119.4) node [anchor=north west][inner sep=0.75pt]    {$G_{4}$};
% Text Node
\draw (446,195.4) node [anchor=north west][inner sep=0.75pt]    {$G_{5}$};
% Text Node
\draw (445,63.4) node [anchor=north west][inner sep=0.75pt]    {$G_{3}$};
% Text Node
\draw (379.35,74.67) node [anchor=north west][inner sep=0.75pt]    {$G_{2}$};
% Text Node
\draw (382,143.4) node [anchor=north west][inner sep=0.75pt]    {$G_{1}$};
\end{tikzpicture}
}
\caption{
The new demand edges that replace a demand edge whose terminals belong in different blocks.
Solid edges represent edges of $G$ and dashed edges represent demand edges.
}
\label{fig:route-contract}
\end{figure}

\begin{proof}
Consider the edge contractions to be done on $G+H$ to obtain $G_i+H_i$.
Then, any cut in $G_i+H_i$ was also a cut in $G+H$.
Since $G,H$ satisfies the cut condition, then $G_i,H_i$ must also satisfy the cut condition.
Furthermore, edge contraction preserves planarity and outerplanarity.

For each $st \in H$ and each $G_i$, the reduction process produces
a request $s_it_i$ in $G_i$. If this is not a loop, then $s_i,t_i$ lie in different components of $G$ after deleting the edges of $G_i$. In this case,
we say that $st$ {\em spawns} $s_it_i$.  Let $J$ be the set of edges  spawned by a demand $st$.
It is easy to see that the edges of  $J$ form an $st$ path.  % I thought it was easy to see after drawing the picture
Hence if each $H_i$ is routable in $G_i$, we have that $H$ is routable in $G$.
\end{proof}

\iffalse
...and let $P$ be any simple path from $s$ to $t$.
For each $G_i$ such that $E(G_i)\cap E(P)\neq\emptyset$,
we obtain a nontrivial demand $s_it_i\in H_i$.
Since $\bigcup_{i=1}^kE(G_i)=E(G)$,
the demands $s_it_i$ form a connected set of edges that join $s$ and $t$ (Figure~ \ref{fig:route-contract}).
Hence, if each $H_i$ is routable in $G_i$, then we may concatenate the paths used to  route
the demands $s_it_i$ to route the demand $st$. Hence  $H$ is also routable in $G$.
\fi

\begin{proof}[Proof of theorem \ref{thm:OP}]
% Now we prove theorem \ref{thm:OP}.
Without loss of generality, we may assume that the edges of $G$ (resp. $H$) have unit capacity (resp. demand).
Otherwise, we may place $u(e)$  (resp. $d(e)$) parallel copies of such an edge $e$. 
In the algorithmic proof, we may also assume that $G$ is 2-node connected.
Otherwise, we may apply Lemma \ref{lemma:2con-cc} and consider each 2-node
connected component of $G$ separately.
When working with  2-node connected $G$, the boundary of its outer face is a simple cycle.
 So we label the nodes $v_1,...,v_n$
by the order they appear on this cycle.

\begin{figure}
\centering
\begin{tikzpicture}[x=0.75pt,y=0.75pt,yscale=-1,xscale=1]
%uncomment if require: \path (0,300); %set diagram left start at 0, and has height of 300

%Shape: Circle [id:dp7565508792984914] 
\draw   (43,123.26) .. controls (43,69.54) and (86.54,26) .. (140.26,26) .. controls (193.97,26) and (237.52,69.54) .. (237.52,123.26) .. controls (237.52,176.97) and (193.97,220.52) .. (140.26,220.52) .. controls (86.54,220.52) and (43,176.97) .. (43,123.26) -- cycle ;
%Shape: Circle [id:dp9507116811579732] 
\draw  [fill={rgb, 255:red, 0; green, 0; blue, 0 }  ,fill opacity=1 ] (120,218.76) .. controls (120,216.68) and (121.68,215) .. (123.76,215) .. controls (125.84,215) and (127.52,216.68) .. (127.52,218.76) .. controls (127.52,220.84) and (125.84,222.52) .. (123.76,222.52) .. controls (121.68,222.52) and (120,220.84) .. (120,218.76) -- cycle ;
%Shape: Circle [id:dp969334026368067] 
\draw  [fill={rgb, 255:red, 0; green, 0; blue, 0 }  ,fill opacity=1 ] (165,215.76) .. controls (165,213.68) and (166.68,212) .. (168.76,212) .. controls (170.84,212) and (172.52,213.68) .. (172.52,215.76) .. controls (172.52,217.84) and (170.84,219.52) .. (168.76,219.52) .. controls (166.68,219.52) and (165,217.84) .. (165,215.76) -- cycle ;
%Shape: Circle [id:dp9506440286233422] 
\draw  [fill={rgb, 255:red, 0; green, 0; blue, 0 }  ,fill opacity=1 ] (208,189.76) .. controls (208,187.68) and (209.68,186) .. (211.76,186) .. controls (213.84,186) and (215.52,187.68) .. (215.52,189.76) .. controls (215.52,191.84) and (213.84,193.52) .. (211.76,193.52) .. controls (209.68,193.52) and (208,191.84) .. (208,189.76) -- cycle ;
%Shape: Circle [id:dp01415579878787332] 
\draw  [fill={rgb, 255:red, 0; green, 0; blue, 0 }  ,fill opacity=1 ] (228,155.76) .. controls (228,153.68) and (229.68,152) .. (231.76,152) .. controls (233.84,152) and (235.52,153.68) .. (235.52,155.76) .. controls (235.52,157.84) and (233.84,159.52) .. (231.76,159.52) .. controls (229.68,159.52) and (228,157.84) .. (228,155.76) -- cycle ;
%Shape: Circle [id:dp07616246479456024] 
\draw  [fill={rgb, 255:red, 0; green, 0; blue, 0 }  ,fill opacity=1 ] (77,200.76) .. controls (77,198.68) and (78.68,197) .. (80.76,197) .. controls (82.84,197) and (84.52,198.68) .. (84.52,200.76) .. controls (84.52,202.84) and (82.84,204.52) .. (80.76,204.52) .. controls (78.68,204.52) and (77,202.84) .. (77,200.76) -- cycle ;
%Shape: Circle [id:dp16008154552777398] 
\draw  [fill={rgb, 255:red, 0; green, 0; blue, 0 }  ,fill opacity=1 ] (45,156.76) .. controls (45,154.68) and (46.68,153) .. (48.76,153) .. controls (50.84,153) and (52.52,154.68) .. (52.52,156.76) .. controls (52.52,158.84) and (50.84,160.52) .. (48.76,160.52) .. controls (46.68,160.52) and (45,158.84) .. (45,156.76) -- cycle ;
%Shape: Circle [id:dp03454633197310253] 
\draw  [fill={rgb, 255:red, 0; green, 0; blue, 0 }  ,fill opacity=1 ] (41,108.76) .. controls (41,106.68) and (42.68,105) .. (44.76,105) .. controls (46.84,105) and (48.52,106.68) .. (48.52,108.76) .. controls (48.52,110.84) and (46.84,112.52) .. (44.76,112.52) .. controls (42.68,112.52) and (41,110.84) .. (41,108.76) -- cycle ;
%Shape: Circle [id:dp1547638652079979] 
\draw  [fill={rgb, 255:red, 0; green, 0; blue, 0 }  ,fill opacity=1 ] (100,33.76) .. controls (100,31.68) and (101.68,30) .. (103.76,30) .. controls (105.84,30) and (107.52,31.68) .. (107.52,33.76) .. controls (107.52,35.84) and (105.84,37.52) .. (103.76,37.52) .. controls (101.68,37.52) and (100,35.84) .. (100,33.76) -- cycle ;
%Shape: Circle [id:dp4272871741188937] 
\draw  [fill={rgb, 255:red, 0; green, 0; blue, 0 }  ,fill opacity=1 ] (219,74.76) .. controls (219,72.68) and (220.68,71) .. (222.76,71) .. controls (224.84,71) and (226.52,72.68) .. (226.52,74.76) .. controls (226.52,76.84) and (224.84,78.52) .. (222.76,78.52) .. controls (220.68,78.52) and (219,76.84) .. (219,74.76) -- cycle ;
%Curve Lines [id:da8526972510836217] 
\draw    (48.76,156.76) .. controls (96.52,140.07) and (183.52,100.07) .. (222.76,74.76) ;
%Curve Lines [id:da19989984729071497] 
\draw    (168.76,215.76) .. controls (139.52,170.07) and (89.52,152.07) .. (48.76,156.76) ;
%Curve Lines [id:da9552786900612449] 
\draw [color={rgb, 255:red, 208; green, 2; blue, 27 }  ,draw opacity=1 ] [dash pattern={on 4.5pt off 4.5pt}]  (48.76,156.76) .. controls (120.52,144.07) and (165.52,151.07) .. (211.76,189.76) ;
%Curve Lines [id:da1268054143015913] 
\draw [color={rgb, 255:red, 208; green, 2; blue, 27 }  ,draw opacity=1 ] [dash pattern={on 4.5pt off 4.5pt}]  (44.76,108.76) .. controls (89.52,106.07) and (155.52,96.07) .. (222.76,74.76) ;
%Curve Lines [id:da706194118051279] 
\draw [color={rgb, 255:red, 208; green, 2; blue, 27 }  ,draw opacity=1 ][line width=3]  [dash pattern={on 7.88pt off 4.5pt}]  (103.76,33.76) .. controls (141.52,62.07) and (173.52,70.07) .. (222.76,74.76) ;
%Curve Lines [id:da952193781429612] 
\draw [color={rgb, 255:red, 208; green, 2; blue, 27 }  ,draw opacity=1 ] [dash pattern={on 4.5pt off 4.5pt}]  (222.76,74.76) .. controls (194.52,119.07) and (198.52,152.07) .. (211.76,189.76) ;
%Shape: Circle [id:dp9595130965988192] 
\draw  [fill={rgb, 255:red, 0; green, 0; blue, 0 }  ,fill opacity=1 ] (60,63.76) .. controls (60,61.68) and (61.68,60) .. (63.76,60) .. controls (65.84,60) and (67.52,61.68) .. (67.52,63.76) .. controls (67.52,65.84) and (65.84,67.52) .. (63.76,67.52) .. controls (61.68,67.52) and (60,65.84) .. (60,63.76) -- cycle ;
%Shape: Circle [id:dp046035493255251136] 
\draw  [fill={rgb, 255:red, 0; green, 0; blue, 0 }  ,fill opacity=1 ] (155,28.76) .. controls (155,26.68) and (156.68,25) .. (158.76,25) .. controls (160.84,25) and (162.52,26.68) .. (162.52,28.76) .. controls (162.52,30.84) and (160.84,32.52) .. (158.76,32.52) .. controls (156.68,32.52) and (155,30.84) .. (155,28.76) -- cycle ;
%Shape: Circle [id:dp05419991889562259] 
\draw  [fill={rgb, 255:red, 0; green, 0; blue, 0 }  ,fill opacity=1 ] (194,44.76) .. controls (194,42.68) and (195.68,41) .. (197.76,41) .. controls (199.84,41) and (201.52,42.68) .. (201.52,44.76) .. controls (201.52,46.84) and (199.84,48.52) .. (197.76,48.52) .. controls (195.68,48.52) and (194,46.84) .. (194,44.76) -- cycle ;
%Shape: Circle [id:dp3722502321801471] 
\draw  [fill={rgb, 255:red, 0; green, 0; blue, 0 }  ,fill opacity=1 ] (233.76,119.5) .. controls (233.76,117.42) and (235.44,115.74) .. (237.52,115.74) .. controls (239.59,115.74) and (241.28,117.42) .. (241.28,119.5) .. controls (241.28,121.58) and (239.59,123.26) .. (237.52,123.26) .. controls (235.44,123.26) and (233.76,121.58) .. (233.76,119.5) -- cycle ;

% Text Node
\draw (113,230.4) node [anchor=north west][inner sep=0.75pt]    {$v_{1}$};
% Text Node
\draw (57,207.4) node [anchor=north west][inner sep=0.75pt]    {$v_{2}$};
% Text Node
\draw (170,228.4) node [anchor=north west][inner sep=0.75pt]    {$v_{n}$};
% Text Node
\draw (211,205.4) node [anchor=north west][inner sep=0.75pt]    {$v_{n-1}$};
% Text Node
\draw (247.01,161) node [anchor=north west][inner sep=0.75pt]  [rotate=-19.44]  {$\vdots $};
% Text Node
\draw (23.19,177.54) node [anchor=north west][inner sep=0.75pt]  [rotate=-325.57]  {$\vdots $};
% Text Node
\draw (80,7.4) node [anchor=north west][inner sep=0.75pt]    {$v_{i}$};
% Text Node
\draw (233,54.4) node [anchor=north west][inner sep=0.75pt]    {$v_{j}$};
\end{tikzpicture}

\caption{
The solid edges form the outerplanar graph $G$.
The dashed edges are the demand edges.
The thick dashed edge is a valid edge to route
because there are no terminals $v_k$ with $i<k<j$.
}
\label{fig:route-op}
\end{figure}
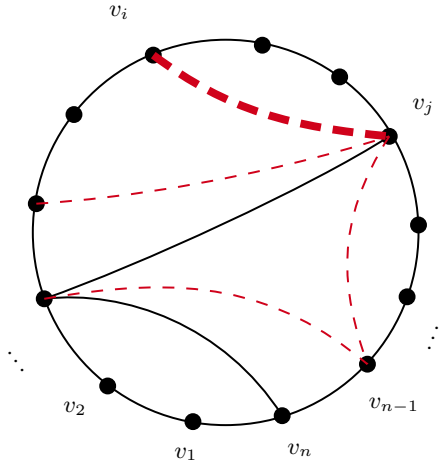

If there are no demand edges, then we are done. Otherwise, since $G+H$ is
outerplanar, without loss of generality there exists $i<j$ such that  $v_iv_j\in E(H)$ and no
$v_k$ is a terminal for $i<k<j$ (Figure \ref{fig:route-op}).
Consider the outer face path $P=v_i,v_{i+1},...,v_j$.
We show that the cut condition is still satisfied after removing both the path $P$
and the demand $v_iv_j$. This represents routing the demand $v_iv_j$ along the path $P$.

Consider a central cut $\delta_G(X)$.
Suppose that $v_i$ and $v_j$ are on opposite sides of the cut. Then, we decrease
both $\delta_G(X)$ and $\delta_H(X)$ by 1, so the cut condition holds.
Suppose that $v_i,v_j\notin X$, that is $v_i$ and $v_j$ are on the same side of the cut.
Then, either $X\subset V(P)\setminus\{v_i,v_j\}$ or $X\cap V(P)=\emptyset$.
We are done if $X\cap V(P)=\emptyset$ because $\delta_G(X)\cap E(P)=0$.
Otherwise, $X\subset V(P)\setminus\{v_i,v_j\}$ contains no terminals,
so we cannot violate the cut condition.
\end{proof}

We  also need the following result from \cite{chekuri2007multicommodity}.
\begin{theorem}
\label{thm:cms}
Let $T$ be a tree with integer edge capacities $u(e)$. Let $H$ denote a demand graph such that  each fundamental cut of $H$ induced by an edge  $e \in T$ contains
at most $k u(e)$ edges of $H$.  We may then partition $H$ into at most $4k$ edges sets $H_1, \ldots ,H_{4k}$ such that each $H_i$ is routable in $T$.
\end{theorem}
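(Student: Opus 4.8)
The plan is to first reduce the statement to a purely combinatorial coloring question. Because $T$ is a tree, each demand has a \emph{unique} routing path, so a subset $H_i$ is routable in $T$ precisely when every edge $e$ carries at most $u(e)$ demands of $H_i$; unlike Theorem~\ref{thm:OP} there is no Eulerian obstruction to handle here. Writing $\ell(e)$ for the number of demands whose path uses $e$, the hypothesis is exactly $\ell(e)\le k\,u(e)$, and the conclusion becomes: partition the demand paths into $4k$ classes so that each class puts load at most $u(e)$ on every edge $e$. First I would dispose of the trivial reductions (delete edges with no demand crossing them, and treat components separately), leaving a clean load-balancing problem on paths in a tree.

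My core construction would root $T$ and color each demand path \emph{at the least common ancestor of its endpoints}, processing nodes top-down. At a node $v$, the paths through $v$ define a multigraph $M_v$ on the tree-edges incident to $v$ (with a dummy vertex absorbing demands ending at $v$), and the degree in $M_v$ of the vertex for an edge $e$ is exactly $\ell(e)$. A global class is feasible iff at every node it induces an edge-coloring of $M_v$ in which the vertex for $e$ is used at most $u(e)$ times, and the paths entering $v$ through its parent edge are already colored from the earlier (higher) steps. For \textbf{unit capacities} this closes immediately: the paths through the parent edge are pre-colored with \emph{distinct} colors, while each newly colored path (those with LCA exactly $v$) meets two vertices of $M_v$, each forbidding at most $k-1$ colors, so $2k-1$ colors always leave a free choice; properness of every local coloring forces global edge-disjointness. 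This already yields $2k-1\le 4k$ classes when $u\equiv 1$.

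The hard part will be general integer capacities, and I expect the obstruction to be what one might call \emph{funneling}. Once $u(e)\ge 2$, several paths sharing a high-capacity edge may legitimately get the same color there, but if they then descend into a common lower-capacity edge they overfill it, and the local rule at the shared ancestor cannot see this downstream conflict. The minimal example is the path $a\!-\!b\!-\!c$ with $u(ab)=2$, $u(bc)=1$ and two copies of the demand $\{a,c\}$: coloring both copies alike is legal at $a$ but violates the capacity of $bc$. Because the color of a path is committed along its entire length, the purely local greedy that settled the unit case is no longer correct.

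To get past funneling I would reduce the capacitated problem to the unit-capacity one by a balancing step: viewing the demands as the edges of a multigraph on $V(T)$ and using an Eulerian alternating-parity argument (pairing odd-degree vertices by an auxiliary matching so the Euler tours close up), I would $2$-color the paths so that each fundamental tree-cut is split evenly, each color crossing $e$ at most $\lceil \ell(e)/2\rceil$ times; the total unimodularity of the tree path--edge incidence matrix guarantees that $\tfrac1k\mathbf{1}$ is genuinely feasible, which is the structural reason such a reduction should succeed with $O(k)$ classes. Interleaving such balancing with the top-down coloring is what would drive the congestion into the unit regime. I anticipate two real difficulties, and these are where I expect the work to lie: making the balancing respect the \emph{whole} laminar family of tree-cuts simultaneously, and controlling the rounding so that the per-class load lands exactly at $u(e)$ rather than at $u(e)+O(\log k)$. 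It is precisely the cost of resolving these, on top of the $2k$ colors from the unit-capacity routine, that I expect to account for the stated constant $4k$ rather than $2k$.
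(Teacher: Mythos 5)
First, a point of reference: the paper does not prove Theorem~\ref{thm:cms} at all --- it is imported verbatim from \cite{chekuri2007multicommodity} and used as a black box. So there is no internal proof to compare against, and your attempt has to be judged on its own. Your opening reduction is right: in a tree each demand has a unique path, so routability of $H_i$ is exactly the statement that each edge $e$ carries at most $u(e)$ paths of $H_i$, and the hypothesis is exactly $\ell(e)\le k\,u(e)$. Your unit-capacity argument is also sound: processing demands at their LCAs top-down, the inductive invariant that all already-colored paths through any processed tree edge have distinct colors is maintained (any two pre-colored paths through a child edge of $v$ both pass through the parent edge of $v$, where they are distinct by induction), and the greedy bound $2(k-1)+1=2k-1$ closes that case.

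The gap is in the general-capacity case, and it is not just the two ``anticipated difficulties'' you flag --- the central tool you propose is false. The path--(tree-edge) incidence matrix of a tree is \emph{not} totally unimodular: take a star with center $o$ and leaves $a,b,c$ and the three demands $ab,bc,ca$; the resulting $3\times 3$ matrix has determinant $\pm 2$. The same example kills the balancing step itself: each star edge has load $\ell(e)=2$, but a $2$-coloring in which each color crosses each edge at most $\lceil \ell(e)/2\rceil=1$ times would be a partition of a triangle into two matchings, which does not exist. (This is the same phenomenon that forces $\sim 3L/2$ rather than $L$ colors for path coloring in stars, i.e.\ Shannon/Vizing for multigraphs.) So the Eulerian/TU halving cannot split every fundamental cut evenly, the additive errors you would incur accumulate over the $\log k$ halving rounds, and the reduction to the unit-capacity regime does not go through. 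In effect you have proved the theorem only for $u\equiv 1$; the capacitated case --- which is the case the paper actually needs, since the tree $\hat u$ from Theorem~\ref{thm:tree} has general integer capacities --- remains open in your write-up. The standard route (and the one in \cite{chekuri2007multicommodity}) is a direct greedy over the demands ordered by LCA depth with a charging argument against the bound $\ell(e)\le k\,u(e)$, not a reduction to unit capacities; if you want to salvage your plan, that is the argument to reconstruct.
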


\subsection{Proof of the Main Theorem}

\outerplanarWEDPapprox*

\begin{proof}
We first run the algorithms to produce a integer-capacitated tree $T,\hat{u}$ which is an $O(1)$ cut approximator for $G$. In addition $T$ is a subtree and it is a conservative approximator for each cut in $G$. First,  we prove that the maximum weight routable in $T$ is not too much smaller than for $G$ (in either the {\sc edp} or {\sc anf} model).  To see this let $S$ be an optimal solution in $G$, whose value is {\sc opt(G)}. Clearly $S$ satisfies the cut condition in $G$ and hence by Theorem~\ref{thm:tree} it satisfies $14 \cdot$ the cut condition in $T,\hat{u}$.
Thus by Theorem~\ref{thm:cms}  there are $56$ sets such that $S = \cup_{i=1}^{56} S_i$ and each $S_i$ is routable in $T$. Hence one of the
sets $S_i$ accrues at least $\frac{1}{56}^{th}$ the profit from {\sc opt(G)}.

Now we use the factor $4$ approximation \cite{chekuri2007multicommodity} to solve the maximum {\sc edp=anf} problem for $T,\hat{u}$. Let $S$ be a subset of requests which are routable in $T$ and have weight at least $\frac{1}{4}$ {\sc opt(T)} $ \geq \frac{1}{224}$ {\sc opt(G)}.  Since $T$ is a subtree of $G$ we have that $G+T$ is outerplanar. Since $T,\hat{u}$ is an under-estimator of cuts in $G$, we have that the edges of  $T$ (viewed as requests) satisfies the cut condition in $G$. Hence by Theorem~\ref{thm:OP} we may route these single edge requests in $G$. Hence since $S$ can route in $T$, we have that $S$ can also route in $G$, completing the proof.
\end{proof}

%\textcolor{red}{I feel we are missing some work on cut approximators. I think by Emek? If anyone recalls or has time to google please do so and add it to our citations.  }

%\textcolor{red}{Other questions of future interest.}
\section{Conclusions}

The technique of finding a single-tree constant-factor cut approximator (for a global constant) appears to hit a limit at outerplanar graphs. 
%An interesting related question is to characterize %the graphs for which $O(1)$ cut/distance embeddings %exist in the sense of the transfer theorem.
%For instance,  logarithmic lower bound is known for %series parallel graphs \cite{gupta2004cuts}.  
 It would  be
interesting to find a graph parameter $k$ which ensures a single-tree $O(f(k))$ cut approximator.

\noindent
The authors thank Nick Harvey for his valuable feedback on this article.

\iffalse
 we still have our question of  (1) when do we have O(1) approximators more generally ie beyond outerplanar graphs,  (2) When do we have simultaneous approximators  for cuts/distances and I suppose (3) do (1) and (2) cooincide for the same class of graphs perhaps? Is it implied by the equivalence theorems?
 \fi

\bibliographystyle{plain}
\bibliography{references}

%\bibliographystyle{plain}
%\bibliography{references}

\appendix

\section{Converting to Integer Weight Cut-Conservative Approximators}
\label{sec:extend}

Section~\ref{sec:non-conservative} described an algorithm to get
\[
    \frac14 d(x,y) \le d^*(x,y) \le 2 d(x,y)
    \mbox{ for leaves $x,y\in L$} .
\]
However, the assigned weights were not necessarily integers.
In this section, we modify the algorithm to assign integer weights,
and achieve the following bound:
\begin{equation}
    \frac1{14} d(x,y) \le d^*(x,y) \le d(x,y)
    \mbox{ for leaves $x,y\in L$} .
    \label{int-bounds}
\end{equation}

The algorithm will be the same as before except for two changes.
First, we choose $r_1,...,r_k$ to be the points that are exactly
distance $\lceil{h(r)/2}\rceil$ away from $r$. Second, we assign
a weight of $\max\{\lfloor{h(r_i)/2}\rfloor,1\}$ to the first edge
on the path from $r_i$ to $r$ for each $i=2,...,k$ (Figure \ref{fig:algorithm-integers}).
Note that this algorithm also terminates because the length of the
longest path from the root to a leaf decreases by at least half the
length of the shortest edge incident to a leaf in each iteration.

% \textcolor{orange}{add picture: algorithm integers}

\begin{figure}
\centering
\begin{tikzpicture}[x=0.75pt,y=0.75pt,yscale=-1,xscale=1]
%uncomment if require: \path (0,300); %set diagram left start at 0, and has height of 300

%Shape: Ellipse [id:dp5985137917826004] 
\draw  [fill={rgb, 255:red, 0; green, 0; blue, 0 }  ,fill opacity=1 ] (145.38,47.73) .. controls (145.38,45.9) and (147.12,44.42) .. (149.27,44.42) .. controls (151.41,44.42) and (153.15,45.9) .. (153.15,47.73) .. controls (153.15,49.56) and (151.41,51.04) .. (149.27,51.04) .. controls (147.12,51.04) and (145.38,49.56) .. (145.38,47.73) -- cycle ;
%Straight Lines [id:da6680434905874217] 
\draw  [dash pattern={on 0.84pt off 2.51pt}]  (5.26,148.48) -- (301.05,148.48) ;
%Shape: Ellipse [id:dp5686654506515438] 
\draw  [fill={rgb, 255:red, 0; green, 0; blue, 0 }  ,fill opacity=1 ] (56.64,148.48) .. controls (56.64,146.65) and (58.38,145.16) .. (60.53,145.16) .. controls (62.68,145.16) and (64.42,146.65) .. (64.42,148.48) .. controls (64.42,150.31) and (62.68,151.79) .. (60.53,151.79) .. controls (58.38,151.79) and (56.64,150.31) .. (56.64,148.48) -- cycle ;
%Shape: Ellipse [id:dp8575462086502181] 
\draw  [fill={rgb, 255:red, 0; green, 0; blue, 0 }  ,fill opacity=1 ] (219.87,149.74) .. controls (219.87,147.91) and (221.61,146.42) .. (223.76,146.42) .. controls (225.91,146.42) and (227.65,147.91) .. (227.65,149.74) .. controls (227.65,151.56) and (225.91,153.05) .. (223.76,153.05) .. controls (221.61,153.05) and (219.87,151.56) .. (219.87,149.74) -- cycle ;
%Shape: Ellipse [id:dp7080861177955666] 
\draw  [fill={rgb, 255:red, 0; green, 0; blue, 0 }  ,fill opacity=1 ] (158.38,148.48) .. controls (158.38,146.65) and (160.12,145.16) .. (162.27,145.16) .. controls (164.41,145.16) and (166.15,146.65) .. (166.15,148.48) .. controls (166.15,150.31) and (164.41,151.79) .. (162.27,151.79) .. controls (160.12,151.79) and (158.38,150.31) .. (158.38,148.48) -- cycle ;
%Shape: Ellipse [id:dp8582710831172611] 
\draw  [fill={rgb, 255:red, 0; green, 0; blue, 0 }  ,fill opacity=1 ] (174.96,98.1) .. controls (174.96,96.28) and (176.7,94.79) .. (178.84,94.79) .. controls (180.99,94.79) and (182.73,96.28) .. (182.73,98.1) .. controls (182.73,99.93) and (180.99,101.42) .. (178.84,101.42) .. controls (176.7,101.42) and (174.96,99.93) .. (174.96,98.1) -- cycle ;
%Straight Lines [id:da1445574006940209] 
\draw    (149.27,47.73) -- (178.84,98.1) ;
%Straight Lines [id:da40767757688034245] 
\draw    (60.53,148.48) -- (149.27,47.73) ;
%Straight Lines [id:da48066186599993865] 
\draw    (162.27,148.48) -- (180.32,95.59) ;
%Straight Lines [id:da010552627774992107] 
\draw    (225.24,147.22) -- (180.32,95.59) ;
%Shape: Triangle [id:dp1209005319962384] 
\draw   (60.53,148.48) -- (96.68,212.09) -- (24.38,212.09) -- cycle ;
%Shape: Triangle [id:dp32139347751602965] 
\draw   (162.27,148.48) -- (189.87,243.57) -- (134.66,243.57) -- cycle ;
%Shape: Triangle [id:dp5168207359663717] 
\draw   (223.65,149.74) -- (251.15,212.27) -- (196.15,212.27) -- cycle ;
%Straight Lines [id:da5704820521699823] 
\draw    (301.41,45.14) -- (301.05,148.48) ;
\draw [shift={(301.05,148.48)}, rotate = 270.2] [color={rgb, 255:red, 0; green, 0; blue, 0 }  ][line width=0.75]    (0,5.59) -- (0,-5.59)   ;
\draw [shift={(301.41,45.14)}, rotate = 270.2] [color={rgb, 255:red, 0; green, 0; blue, 0 }  ][line width=0.75]    (0,5.59) -- (0,-5.59)   ;

% Text Node
\draw (43.21,124.3) node [anchor=north west][inner sep=0.75pt]    {$r_{1}$};
% Text Node
\draw (136.94,149.24) node [anchor=north west][inner sep=0.75pt]    {$r_{2}$};
% Text Node
\draw (234.48,147.79) node [anchor=north west][inner sep=0.75pt]    {$r_{3}$};
% Text Node
\draw (157.17,32.13) node [anchor=north west][inner sep=0.75pt]    {$r$};
% Text Node
\draw (310.76,76.38) node [anchor=north west][inner sep=0.75pt]    {$\left\lceil \frac{h( r)}{2}\right\rceil $};
% Text Node
\draw (120.92,107.48) node [anchor=north west][inner sep=0.75pt]  [font=\small]  {$\left\lfloor \frac{h( r_{2})}{2}\right\rfloor $};
% Text Node
\draw (212.64,100.48) node [anchor=north west][inner sep=0.75pt]  [font=\small]  {$\left\lfloor \frac{h( r_{3})}{2}\right\rfloor $};
% Text Node
\draw (169.88,64.56) node [anchor=north west][inner sep=0.75pt]    {$0$};
% Text Node
\draw (92.65,83.7) node [anchor=north west][inner sep=0.75pt]    {$0$};
% Text Node
\draw (41.25,190.2) node [anchor=north west][inner sep=0.75pt]    {$T( r_{1})$};
% Text Node
\draw (144.19,219.15) node [anchor=north west][inner sep=0.75pt]    {$T( r_{2})$};
% Text Node
\draw (205.09,191.39) node [anchor=north west][inner sep=0.75pt]    {$T( r_{3})$};
\end{tikzpicture}
\caption{
The algorithm assigns integer weights to the edges above $r_1,...,r_k$,
and is run recursively on the subtrees $T(r_1),...,T(r_k)$.
}
\label{fig:algorithm-integers}
\end{figure}
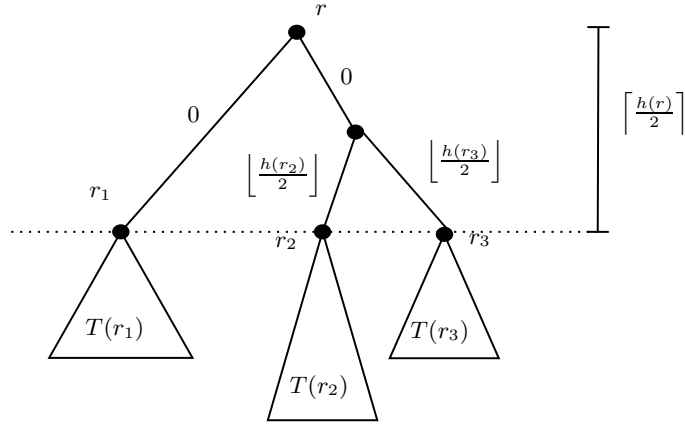

Since we assign 0 weight to edges on the path from $r_1$ to $r$,
there exists a 0 weight path from each node to a leaf.
It remains to prove that the assigned weights satisfy the bound.

To prove the upper bound $d^*(x,y)\le d(x,y)$, we need the following
lemma.

\begin{lemma} \label{int-upper-lemma}
For each leaf $x\in L$,
\begin{equation}
2d^*(x, r) \le 2d(x, r) - h(r) + 2 . \label{int-upper-bound}
\end{equation}
Furthermore, if $h(r)\le1$, we have
\begin{equation}
d^*(x, r) \le d(x, r) . \label{strong-upper-bound}
\end{equation}
\end{lemma}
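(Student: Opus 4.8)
The plan is to prove both inequalities \eqref{int-upper-bound} and \eqref{strong-upper-bound} \emph{simultaneously} by induction following the recursive structure of the algorithm, in the same spirit as the non-integer argument of Section~\ref{sec:non-conservative} but carefully tracking the rounding. First I would record the facts that drive the whole computation: by the choice of the points, $d(r,r_i)=\lceil h(r)/2\rceil$ for every $i$; since any leaf of $T(r_i)$ is a leaf of $T(r)$, we get $h(r_i)\ge h(r)-d(r,r_i)=\lfloor h(r)/2\rfloor$; and because the path to the nearest leaf of $r$ passes through some $r_i$, the minimal-height child satisfies $h(r_1)=\lfloor h(r)/2\rfloor$. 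I would also use $\lceil h(r)/2\rceil+\lfloor h(r)/2\rfloor=h(r)$ repeatedly, together with $2\lceil h(r)/2\rceil\ge h(r)$. The base case is a subtree all of whose assigned weights are $0$ (in particular a single leaf), where $d^*(x,r)=0$ and both bounds follow from $d(x,r)\ge h(r)\ge0$.

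For the inductive step of \eqref{int-upper-bound} I would split on which subtree contains $x$. If $x\in T(r_1)$, the $r_1$--$r$ path has weight $0$, so $d^*(x,r)=d^*(x,r_1)$; applying the hypothesis on $T(r_1)$ and substituting $d(x,r_1)=d(x,r)-\lceil h(r)/2\rceil$ and $h(r_1)=\lfloor h(r)/2\rfloor$ reduces the target to $2\lceil h(r)/2\rceil+\lfloor h(r)/2\rfloor\ge h(r)$, which holds. If $x\in T(r_i)$ with $i\ne1$, then $d^*(r_i,r)=\max\{\lfloor h(r_i)/2\rfloor,1\}$, and I would further split on $h(r_i)\ge2$ versus $h(r_i)\le1$: in the former case the assigned weight is $\lfloor h(r_i)/2\rfloor\le h(r_i)/2$, which the $-h(r_i)$ term of the inductive bound absorbs; in the latter the weight is forced to $1$, and I would invoke the strong bound \eqref{strong-upper-bound} on $T(r_i)$, after which the target again collapses to $2\lceil h(r)/2\rceil\ge h(r)$.

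To close the induction I must also establish \eqref{strong-upper-bound} when $h(r)\le1$, i.e. $h(r)=1$ (since $r$ is a non-leaf). Here $\lceil h(r)/2\rceil=1$, so every $r_i$ is adjacent to $r$ and $r_1$ is itself a leaf; running the same two cases but tracking $d^*$ rather than $2d^*$ gives $d^*(x,r)\le d(x,r_i)+1=d(x,r)$ in each case, using the strong bound on the height-$\le1$ children and the general bound otherwise.

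The step I expect to be the crux is the subcase $h(r_i)\le1$ of Case~2: the rule $\max\{\cdot,1\}$ pays a full unit of weight even though $\lfloor h(r_i)/2\rfloor=0$, and feeding the generic estimate \eqref{int-upper-bound} back in leaves a deficit that cannot be recovered (one would need $2\lceil h(r)/2\rceil\ge h(r)+2$, which is false). This is exactly why the two inequalities must be carried together: only the sharper bound \eqref{strong-upper-bound} for height-$\le1$ subtrees kills the extra $+1$. Arranging the floor/ceiling bookkeeping so that this single unit is absorbed at each level rather than accumulating up the recursion is the delicate part.
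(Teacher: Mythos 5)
Your proposal is correct and takes essentially the same route as the paper: a simultaneous induction on \eqref{int-upper-bound} and \eqref{strong-upper-bound}, split into the case $x\in T(r_1)$ and the cases $x\in T(r_i)$, $i\neq 1$, with the latter subdivided according to $h(r_i)\ge 2$ or $h(r_i)\le 1$, and with the stronger bound \eqref{strong-upper-bound} invoked inductively precisely to absorb the forced unit weight from $\max\{\lfloor h(r_i)/2\rfloor,1\}$. You have correctly identified the crux that the paper's Case~3 (and Cases~4--6) are built around.
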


\begin{proof}
We prove the upper bounds in \eqref{int-upper-bound} and \eqref{strong-upper-bound}
simultaneously by induction. We are done if $T$ is a single node, otherwise
we consider a few cases depending on whether $x\in T(r_1)$ and whether $h(r_i)\ge2$.

\textbf{Case 1} (inequality \eqref{int-upper-bound}):
$x\in T(r_1)$.
\begin{align*}
2d^*(x, r)
&= 2d^*(x, r_1) + 2d^*(r_1, r)
&& \textrm{($r_1$ is between $x$ and $r$)} \\
&= 2d^*(x, r_1)
&& \textrm{(by definition of $u^*$)} \\
&\le 2d(x, r_1) - h(r_1) + 2
&& \textrm{(by induction with \eqref{int-upper-bound})} \\
&= 2d(x, r) - 2d(r, r_1) - h(r_1) + 2
&& \textrm{($r_1$ is between $x$ and $r$)} \\
&= 2d(x, r) - (\lceil{h(r)/2}\rceil + h(r)) + 2
&& \textrm{(since $d(r_1,r)+h(r_1)=h(r)$)} \\
&\le 2d(x, r) - h(r) + 2
\end{align*}

\textbf{Case 2} (inequality \eqref{int-upper-bound}):
$x\in T(r_i)$ for some $i\neq1$, and $h(r_i)\ge2$.
Note that this means $\lfloor{h(r_i)/2}\rfloor\ge1$.
\begin{align*}
2d^*(x, r)
&= 2d^*(x, r_i) + 2d^*(r_i, r)
&& \textrm{($r_i$ is between $x$ and $r$)} \\
&= 2d^*(x, r_i) + 2\lfloor{h(r_i)/2}\rfloor
&& \textrm{(by definition of $u^*$)} \\
&\le 2d(x, r_i) - h(r_i) + 2 + h(r_i) 
&& \textrm{(by induction with \eqref{int-upper-bound})} \\
&= 2d(x, r) - 2d(r_i, r) + 2
&& \textrm{($r_i$ is between $x$ and $r$)} \\
&= 2d(x, r) - 2\lceil{h(r)/2}\rceil + 2
&& \textrm{(since $d(r_i, r) = \lceil{h(r)/2}\rceil$)} \\
&\le 2d(x, r) - h(r) + 2
\end{align*}

\textbf{Case 3} (inequality \eqref{int-upper-bound}):
$x\in T(r_i)$ and $h(r_i)\le1$ for some $i\neq1$.
\begin{align*}
2d^*(x, r)
&= 2d^*(x, r_i) + 2d^*(r_i, r)
&& \textrm{($r_i$ is between $x$ and $r$)} \\
&= 2d^*(x, r_i) + 2
&& \textrm{(by definition of $u^*$)} \\
&\le 2d(x, r_i) + 2
&& \textrm{(by induction with \eqref{strong-upper-bound})} \\
&= 2d(x, r) - 2d(r_i, r) + 2
&& \textrm{($r_i$ is between $x$ and $r$)} \\
&= 2d(x, r) - 2\lceil{h(r)/2}\rceil + 2
&& \textrm{($d(r_i, r) = \lceil{h(r)/2}\rceil$ by definition of $r_i$)} \\
&\le 2d(x, r) - h(r) + 2
\end{align*}

For the remaining cases, we assume that $h(r)=1$,
so $d(r, r_i) = 1$ for each $i$.

\textbf{Case 4} (inequality \eqref{strong-upper-bound}):
$x\in T(r_1)$. Since $h(r)=1$, we necessarily have $x=r_1$ and $d(x,r)=1$,
so
\[
d^*(x, r) = 0 \le 1 = d(x, r) .
\]

\textbf{Case 5} (inequality \eqref{strong-upper-bound}):
$x\in T(r_i)$ for some $i\neq1$ and $h(r_i)\ge2$.
\begin{align*}
2d^*(x, r)
&= 2d^*(x, r_i) + 2d^*(r_i, r)
&& \textrm{($r_i$ is between $x$ and $r$)} \\
&= 2d^*(x, r_i) + 2\lfloor{h(r_i)/2}\rfloor
&& \textrm{(by definition of $u^*$)} \\
&\le 2d(x, r_i) - h(r_i) + 2 + h(r_i) 
&& \textrm{(by induction with \eqref{int-upper-bound})} \\
&= 2d(x, r)
&& \textrm{(because $d(r,r_i)=1$)}
\end{align*}

\textbf{Case 6} (inequality \eqref{strong-upper-bound}):
$x\in T(r_i)$ and $h(r_i)\le1$ for some $i\neq1$.
\begin{align*}
2d^*(x, r)
&= 2d^*(x, r_i) + 2d^*(r_i, r)
&& \textrm{($r_i$ is between $x$ and $r$)} \\
&= 2d^*(x, r_i) + 2
&& \textrm{(by definition of $u^*$)} \\
&\le 2d(x, r_i) + 2
&& \textrm{(by induction with \eqref{strong-upper-bound})} \\
&= 2d(x, r)
&& \textrm{(because $d(r,r_i)=1$)}
\end{align*}

\end{proof}

To prove the lower bound $d(x,y)\le 14d^*(x,y)$, we need the following
lemma.

\begin{lemma} \label{int-lower-lemma}
For each leaf $x$,
\begin{equation}
    d(x, r) \le 3d^*(x, r) + h(r) . \label{int-lower-bound}
\end{equation}
\end{lemma}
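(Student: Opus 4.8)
The plan is to prove \eqref{int-lower-bound} by induction on the recursion tree, exactly paralleling the proofs of the non-integer lower bound \eqref{lower-bound} and of Lemma~\ref{int-upper-lemma}. The base case is when $T$ is a single node, where $d(x,r)=d^*(x,r)=0=h(r)$ and the inequality is trivial. For the inductive step I root $T$ at the non-leaf node $r$, form the points $r_1,\dots,r_k$ at distance $\lceil h(r)/2\rceil$ from $r$ ordered by increasing height, and split into two cases according to whether the leaf $x$ lies in the first subtree $T(r_1)$.

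First I would dispatch Case~1, where $x\in T(r_1)$. Here the $r_1$--$r$ path carries weight $0$, so $d^*(x,r)=d^*(x,r_1)$, while $d(x,r)=d(x,r_1)+d(r_1,r)$. Applying the inductive hypothesis to $T(r_1)$ gives $d(x,r_1)\le 3d^*(x,r_1)+h(r_1)$, and since $r_1$ realizes the minimum height I may use the identity $d(r_1,r)+h(r_1)=h(r)$ (exactly as in Case~1 of Lemma~\ref{int-upper-lemma}). Summing yields $d(x,r)\le 3d^*(x,r)+h(r_1)+d(r_1,r)=3d^*(x,r)+h(r)$, closing this case with no slack.

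The substance is in Case~2, where $x\in T(r_i)$ for some $i\ne 1$. Now $d^*(x,r)=d^*(x,r_i)+\max\{\lfloor h(r_i)/2\rfloor,1\}$ and $d(x,r)=d(x,r_i)+\lceil h(r)/2\rceil$, and the inductive hypothesis gives $d(x,r_i)\le 3d^*(x,r_i)+h(r_i)$. Substituting, the claim reduces to the purely arithmetic inequality
\begin{equation*}
h(r_i)+\Bigl\lceil\tfrac{h(r)}{2}\Bigr\rceil \;\le\; 3\max\Bigl\{\Bigl\lfloor\tfrac{h(r_i)}{2}\Bigr\rfloor,1\Bigr\}+h(r).
\end{equation*}
The key obstacle is that $h(r_i)$ is \emph{not} controlled from above by $h(r)$ (the subtree $T(r_i)$ may be arbitrarily deep), so I cannot bound $h(r_i)$ in terms of $h(r)$; instead I plan to absorb the two left-hand summands separately. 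I would verify the elementary fact that $3\max\{\lfloor a/2\rfloor,1\}\ge a$ for every integer $a\ge 0$ (checking $a\le 3$ by hand, where the $\max$ contributes $3$, and using $3\lfloor a/2\rfloor\ge a$ for $a\ge 4$), which handles the $h(r_i)$ term, and combine it with $\lceil h(r)/2\rceil\le h(r)$, valid since $h(r)\ge 1$ for the non-leaf root $r$. Adding these two bounds gives exactly the displayed inequality, completing Case~2 and hence the induction.
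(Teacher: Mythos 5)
Your proposal is correct and follows essentially the same route as the paper's proof: induction with the same two cases according to whether $x\in T(r_1)$, the identity $d(r_1,r)+h(r_1)=h(r)$ in Case~1, and in Case~2 the same two bounds the paper uses, namely $3\max\{\lfloor h(r_i)/2\rfloor,1\}\ge h(r_i)$ (the paper writes this as $3\max\{1,\lfloor h(r_i)/2\rfloor\}\ge 1+2\lfloor h(r_i)/2\rfloor\ge h(r_i)$) and $d(r_i,r)=\lceil h(r)/2\rceil\le h(r)$. The only difference is presentational: you run the chain forward from $d(x,r)$ and isolate the arithmetic as a standalone inequality, while the paper rearranges from $3d^*(x,r)$.
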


\begin{proof}
We prove the lower bound in \eqref{int-lower-bound} by induction.

\textbf{Case 1}: $x \in T(r_1)$.
\begin{align*}
3d^*(x, r)
&= 3d^*(x, r_1) + 3d^*(r_1, r)
&& \textrm{($r_1$ is between $x$ and $r$)} \\
&= 3d^*(x, r_1)
&& \textrm{(by definition of $u^*$)} \\
&\ge d(x, r_1) - h(r_1)
&& \textrm{(by induction)} \\
&= d(x, r) - d(r, r_1) - h(r_1)
&& \textrm{($r_1$ is between $x$ and $r$)} \\
&= d(x, r) - h(r)
&& \textrm{(by definition of $r_1$)}
\end{align*}
Rearranging gives the desired inequality.

\textbf{Case 2}: $x \in T(r_i)$ for some $i\neq1$.
\begin{align*}
3d^*(x, r)
&= 3d^*(x, r_i) + 3d^*(r_i, r)
&& \textrm{($r_i$ is between $x$ and $r$)} \\
&\ge d(x, r_i) - h(r_i) + 3d^*(r_i, r)
&& \textrm{(by induction)} \\
&= d(x, r_i) - h(r_i) + 3\max\{1, \lfloor{h(r_i)/2}\rfloor\}
&& \textrm{(by definition of $u^*$)} \\
&= d(x, r_i) - h(r_i) + 1 + 2\lfloor{h(r_i)/2}\rfloor \\
&\ge d(x, r_i) \\
&= d(x, r) - d(r_i, r)
&& \textrm{($r_i$ is between $x$ and $r$)} \\
&\ge d(x, r) - h(r)
&& \textrm{(since $d(r_i,r)\le h(r)$)}
\end{align*}
Rearranging gives the desired inequality.

\end{proof}

Finally, we prove the bounds in \eqref{int-bounds} by induction.
Let $x,y\in L$ be two leaves of $T$. Suppose that
$x\in T(r_i)$ and $y\in T(r_j)$. By induction, we may assume that
$i\neq j$ (otherwise we may consider the subtree rooted at $r_i$
since the algorithm is recursive). Without loss of generality,
suppose that $i<j$.

We prove the upper bound. Let $c$ be the lowest common ancestor
of $x$ and $y$. Since $r_i\neq r_j$, the node $c$ is also an
ancestor of $r_i$ and $r_j$, so $d(r_i,c)\ge1$ and $d(r_j,c)\ge1$.
First we show that $d^*(x,c)\le d(x,c)$ by considering two cases.

\textbf{Case 1}: $h(r_i)\ge2$.
\begin{align*}
2d^*(x,c)
&= 2d^*(x,r_i) + 2d^*(r_i,c)
&& \textrm{($r_i$ is between $x$ and $c$)} \\
&\le 2d(x, r_i) - h(r_i) + 2 + 2d^*(r_i, c)
&& \textrm{(by upper bound \eqref{int-upper-bound})} \\
&\le 2d(x, r_i) - h(r_i) + 2 + 2\lfloor{h(r_i)/2}\rfloor
&& \textrm{(by definition of $u^*$)} \\
&\le 2d(x, r_i) + 2
&& \textrm{(since $2\lfloor{h(r_i)/2}\rfloor\le h(r_i)$)} \\
&\le 2d(x, r_i) + 2d(r_i, c)
&& \textrm{(since $d(r_i,c)\ge1$)} \\
&= 2d(x, c)
\end{align*}

\textbf{Case 2}: $h(r_i)\le1$.
\begin{align*}
2d^*(x,c)
&= 2d^*(x,r_i) + 2d^*(r_i,c)
&& \textrm{($r_i$ is between $x$ and $c$)} \\
&\le 2d(x, r_i) + 2d^*(r_i, c)
&& \textrm{(by upper bound \eqref{strong-upper-bound})} \\
&= 2d(x, r_i) + 2
&& \textrm{(by definition of $u^*$)} \\
&\le 2d(x, c)
&& \textrm{(since $d(r_i,c)\ge1$)}
\end{align*}

The proof that $d^*(y,c)\le2d(y,c)$ is the same. We conclude that
\[
d^*(x,y) = d^*(x,c) + d^*(y,c)
\le d(x,c) + d(y,c) = d(x,y) ,
\]
and the upper bound is proved.

% \vspace{12pt}

We prove the lower bound. First note that $i<j$ means
\[
14d^*(r_i, r_j)
\ge 14\max\{1, \lfloor{h(r_j)/2}\rfloor)\}
\ge 6\cdot 2 + 8\cdot \lfloor{h(r_j)/2}\rfloor
\ge 2 + 4h(r_j) .
\]
Then, we can compute
\begin{align*}
d(x, y)
&= d(x, r_i) + d(r_i, r_j) + d(r_j, y) \\
&\le d(x, r_i) + d(y, r_j) + d(r, r_i) + d(r, r_j) \\
&\le d(x, r_i) + d(y, r_j) + h(r_i) + h(r_j) + 2
&&\textrm{(since $d(r,r_i)\le h(r_i)+1$)} \\
&\le 3d^*(x, r_i) + 3d^*(y, r_j) + 2h(r_i) + 2h(r_j) + 2
&& \textrm{(by \eqref{int-lower-bound})} \\
&\le 14d^*(x, r_i) + 14d^*(y, r_j) + 4h(r_j) + 2 \\
&\le 14d^*(x, y) - 14d^*(r_i, r_j) + 4h(r_j) + 2 \\
&\le 14d^*(x, y) - 2 - 4h(r_j) + 4h(r_j) + 2 \\
&\le 14d^*(x, y) .
\end{align*}

This proves inequality \eqref{int-bounds}.

 \section{Lower bound on  Congestion in the Exact Weight Model} 
\label{sec:lb}

In this section we prove Theorem~\ref{thm:lowerbound}.

\begin{proof}
Let $G$ be an undirected unit capacity graph with $3\times2^n$ nodes defined as follows.
Label the nodes of $G$ with the integers $0$ to $3\times2^n-1$,
and arrange the nodes in a circle so that node $i+1$ comes after node $i$.
First add the three edges $(0,2^n)$, $(2^n,2^{n+1})$, and $(2^{n+1},0)$.
Then, recursively for each edge $(u,v)$ that is not between adjacent nodes on the circle,
add the two edges $(u,(u+v)/2)$ and $((u+v)/2,v)$ (Figure \ref{fig:lb-congestion}).
We show that any spanning tree of $G$ has $\Omega(n)$ congestion.

\begin{figure}
\centering
\begin{tikzpicture}[x=0.75pt,y=0.75pt,yscale=-1,xscale=1]
%uncomment if require: \path (0,300); %set diagram left start at 0, and has height of 300

%Shape: Circle [id:dp5229747482740297] 
\draw   (36,148.18) .. controls (36,86.46) and (86.03,36.43) .. (147.75,36.43) .. controls (209.47,36.43) and (259.5,86.46) .. (259.5,148.18) .. controls (259.5,209.9) and (209.47,259.93) .. (147.75,259.93) .. controls (86.03,259.93) and (36,209.9) .. (36,148.18) -- cycle ;
%Straight Lines [id:da6310422948075527] 
\draw    (147.75,36.43) -- (236.75,217.18) ;
%Shape: Circle [id:dp5658999792603732] 
\draw  [fill={rgb, 255:red, 0; green, 0; blue, 0 }  ,fill opacity=1 ] (144,36.43) .. controls (144,34.36) and (145.68,32.68) .. (147.75,32.68) .. controls (149.82,32.68) and (151.5,34.36) .. (151.5,36.43) .. controls (151.5,38.5) and (149.82,40.18) .. (147.75,40.18) .. controls (145.68,40.18) and (144,38.5) .. (144,36.43) -- cycle ;
%Shape: Circle [id:dp7288194757952107] 
\draw  [fill={rgb, 255:red, 0; green, 0; blue, 0 }  ,fill opacity=1 ] (49,91.18) .. controls (49,89.11) and (50.68,87.43) .. (52.75,87.43) .. controls (54.82,87.43) and (56.5,89.11) .. (56.5,91.18) .. controls (56.5,93.25) and (54.82,94.93) .. (52.75,94.93) .. controls (50.68,94.93) and (49,93.25) .. (49,91.18) -- cycle ;
%Shape: Circle [id:dp9697430219152647] 
\draw  [fill={rgb, 255:red, 0; green, 0; blue, 0 }  ,fill opacity=1 ] (144,259.93) .. controls (144,257.86) and (145.68,256.18) .. (147.75,256.18) .. controls (149.82,256.18) and (151.5,257.86) .. (151.5,259.93) .. controls (151.5,262) and (149.82,263.68) .. (147.75,263.68) .. controls (145.68,263.68) and (144,262) .. (144,259.93) -- cycle ;
%Shape: Circle [id:dp28991572300593993] 
\draw  [fill={rgb, 255:red, 0; green, 0; blue, 0 }  ,fill opacity=1 ] (51,210.18) .. controls (51,208.11) and (52.68,206.43) .. (54.75,206.43) .. controls (56.82,206.43) and (58.5,208.11) .. (58.5,210.18) .. controls (58.5,212.25) and (56.82,213.93) .. (54.75,213.93) .. controls (52.68,213.93) and (51,212.25) .. (51,210.18) -- cycle ;
%Shape: Circle [id:dp9029079427654931] 
\draw  [fill={rgb, 255:red, 0; green, 0; blue, 0 }  ,fill opacity=1 ] (233,217.18) .. controls (233,215.11) and (234.68,213.43) .. (236.75,213.43) .. controls (238.82,213.43) and (240.5,215.11) .. (240.5,217.18) .. controls (240.5,219.25) and (238.82,220.93) .. (236.75,220.93) .. controls (234.68,220.93) and (233,219.25) .. (233,217.18) -- cycle ;
%Shape: Circle [id:dp05046225259493675] 
\draw  [fill={rgb, 255:red, 0; green, 0; blue, 0 }  ,fill opacity=1 ] (238,92.18) .. controls (238,90.11) and (239.68,88.43) .. (241.75,88.43) .. controls (243.82,88.43) and (245.5,90.11) .. (245.5,92.18) .. controls (245.5,94.25) and (243.82,95.93) .. (241.75,95.93) .. controls (239.68,95.93) and (238,94.25) .. (238,92.18) -- cycle ;
%Shape: Circle [id:dp7042307899888467] 
\draw  [fill={rgb, 255:red, 0; green, 0; blue, 0 }  ,fill opacity=1 ] (92,50.18) .. controls (92,48.11) and (93.68,46.43) .. (95.75,46.43) .. controls (97.82,46.43) and (99.5,48.11) .. (99.5,50.18) .. controls (99.5,52.25) and (97.82,53.93) .. (95.75,53.93) .. controls (93.68,53.93) and (92,52.25) .. (92,50.18) -- cycle ;
%Shape: Circle [id:dp3626056073760364] 
\draw  [fill={rgb, 255:red, 0; green, 0; blue, 0 }  ,fill opacity=1 ] (32.25,151.93) .. controls (32.25,149.86) and (33.93,148.18) .. (36,148.18) .. controls (38.07,148.18) and (39.75,149.86) .. (39.75,151.93) .. controls (39.75,154) and (38.07,155.68) .. (36,155.68) .. controls (33.93,155.68) and (32.25,154) .. (32.25,151.93) -- cycle ;
%Shape: Circle [id:dp41596148991396276] 
\draw  [fill={rgb, 255:red, 0; green, 0; blue, 0 }  ,fill opacity=1 ] (90,246.18) .. controls (90,244.11) and (91.68,242.43) .. (93.75,242.43) .. controls (95.82,242.43) and (97.5,244.11) .. (97.5,246.18) .. controls (97.5,248.25) and (95.82,249.93) .. (93.75,249.93) .. controls (91.68,249.93) and (90,248.25) .. (90,246.18) -- cycle ;
%Shape: Circle [id:dp7827311322916628] 
\draw  [fill={rgb, 255:red, 0; green, 0; blue, 0 }  ,fill opacity=1 ] (202,54.18) .. controls (202,52.11) and (203.68,50.43) .. (205.75,50.43) .. controls (207.82,50.43) and (209.5,52.11) .. (209.5,54.18) .. controls (209.5,56.25) and (207.82,57.93) .. (205.75,57.93) .. controls (203.68,57.93) and (202,56.25) .. (202,54.18) -- cycle ;
%Shape: Circle [id:dp9287378729954794] 
\draw  [fill={rgb, 255:red, 0; green, 0; blue, 0 }  ,fill opacity=1 ] (195,249.18) .. controls (195,247.11) and (196.68,245.43) .. (198.75,245.43) .. controls (200.82,245.43) and (202.5,247.11) .. (202.5,249.18) .. controls (202.5,251.25) and (200.82,252.93) .. (198.75,252.93) .. controls (196.68,252.93) and (195,251.25) .. (195,249.18) -- cycle ;
%Shape: Circle [id:dp22022048029396424] 
\draw  [fill={rgb, 255:red, 0; green, 0; blue, 0 }  ,fill opacity=1 ] (255.75,151.93) .. controls (255.75,149.86) and (257.43,148.18) .. (259.5,148.18) .. controls (261.57,148.18) and (263.25,149.86) .. (263.25,151.93) .. controls (263.25,154) and (261.57,155.68) .. (259.5,155.68) .. controls (257.43,155.68) and (255.75,154) .. (255.75,151.93) -- cycle ;
%Straight Lines [id:da05427062565968055] 
\draw    (236.75,217.18) -- (147.75,259.93) ;
%Straight Lines [id:da5934834138169314] 
\draw    (241.75,92.18) -- (236.75,217.18) ;
%Straight Lines [id:da33032109086292305] 
\draw    (241.75,92.18) -- (147.75,36.43) ;
%Straight Lines [id:da4310632689219658] 
\draw    (147.75,259.93) -- (54.75,210.18) ;
%Straight Lines [id:da5965588669252648] 
\draw    (54.75,210.18) -- (52.75,91.18) ;
%Straight Lines [id:da5298537998930692] 
\draw    (147.75,36.43) -- (52.75,91.18) ;
%Straight Lines [id:da8344849464837221] 
\draw    (54.75,210.18) -- (236.75,217.18) ;
%Straight Lines [id:da6623833728634416] 
\draw    (54.75,210.18) -- (147.75,36.43) ;

% Text Node
\draw (142,10.02) node [anchor=north west][inner sep=0.75pt]    {$0$};
% Text Node
\draw (248,214.02) node [anchor=north west][inner sep=0.75pt]    {$4$};
% Text Node
\draw (29,209.02) node [anchor=north west][inner sep=0.75pt]    {$8$};
\end{tikzpicture}
\caption{
The congestion lower bound graph for $n=2$.
}
\label{fig:lb-congestion}
\end{figure}
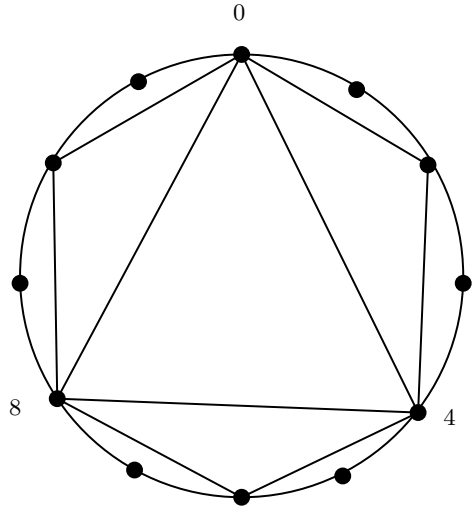

Since $G$ is planar and 2-connected, by Andersen \& Feige \cite{andersen2009interchanging},
we know that every spanning tree of $G$ with congestion $\rho$ gives a spanning tree of
the planar dual of $G$ with stretch at most $\rho+1$. Hence,
it suffices to show that every spanning tree of the planar dual of $G$ has $\Omega(n)$ stretch.

Let $v\in V(\bar G)$ be the node that corresponds to the outer face of $G$.
Let $u_k\in V(\bar G)$ be the node that corresponds to the face of $G$ bounded by
the triangle $0, 2^k, 2^{k+1}$ for $k=0,...,n$.
We show that the distance between $v$ and $u_n$ is $n+1$.
By symmetry, any path from $v$ to $u_n$ has the same length,
so we just need to find the length of one such path.
Note that each $u_k$ is adjacent to $u_{k-1}$ for $k=1,...,n$, and $u_0$ is adjacent to $v$,
so $v, u_0, u_1, ..., u_n$ is a path from $v$ to $u_n$ with length $n+1$.
Also note that there are at least two node disjoint paths from $v$ to $u_n$,
so the shortest cycle containing both $v$ and $u_n$ has length $2n+2$.
Let $C$ denote this cycle. By  \cite{rabinovich1998lower},
any dominating tree of $C$ has stretch $\Omega(|V(C)|)=\Omega(n)$.
In particular, any spanning tree of $\bar G$ is a dominating tree of $C$,
so any spanning tree of $\bar G$ has stretch $\Omega(n)$.

We conclude that any spanning tree of $G$ has congestion $\Omega(n)=\Omega(\log|V(G)|)$.
\end{proof}

\end{document}